\gdef\fignphardnessscale{0.809}
\title{The Complexity of Drawing Graphs on~Few~Lines~and~Few~Planes} %
\author{Steven Chaplick}{Maastricht University, The Netherlands}{s.chaplick@maastrichtuniversity.nl}{0000-0003-3501-4608}{Partially supported by DFG project Wo 758/11-1.}
\author{Krzysztof Fleszar}{Institute of Informatics, University of
  Warsaw,
  Poland}{kfleszar@mimuw.edu.pl}{0000-0002-1129-3289}{Supported by
  Conicyt PCI PII 20150140 and Millennium Nucleus Information and
  Coordination in Networks RC130003.}
\author{Fabian Lipp}{Institut f\"ur Informatik, Universit\"at
  W\"urzburg,
  Germany}{fabian.lipp@uni-wuerzburg.de}{0000-0001-7833-0454}{Supported
  by a Cusanuswerk PhD scholarship.}
\author{Alexander Ravsky}{Pidstryhach Institute for Applied Problems of
  Mechanics and Mathematics, National~Academy~of Sciences of Ukraine,
  Lviv, Ukraine}{alexander.ravsky@uni-wuerzburg.de}{}{Supported by DFG
  project Wo 758/9-1.}
\author{Oleg~Verbitsky}{Institut f\"ur Informatik, Humboldt
  Universit\"at,
  Germany}{verbitsk@informatik.hu-berlin.de}{0000-0002-9524-1901}{Supported
  by DFG grant VE 652/1-2.  On leave from the IAPMM, Lviv, Ukraine.}
\author{Alexander~Wolff}{Institut f\"ur Informatik, Universit\"at
  W\"urzburg, Germany}{}{0000-0001-5872-718X}{}
\authorrunning{S.~Chaplick, K.~Fleszar, F.~Lipp, A.~Ravsky,
  O.~Verbitsky, and A.~Wolff}
\keywords{Graph drawing, Affine cover numbers, Line cover number,
  Plane cover number, Computational complexity}
\newcounter{claim}
\newenvironment{claim}{\refstepcounter{claim}%
\par\medskip\par\noindent{\textit{Claim~\theclaim.~}}}%
{\par\smallskip\par}
\newenvironment{subproof}{\par\noindent{\textit{Proof of Claim~\theclaim.~}}}%
{$\,\triangleleft$\par\medskip\par}
\newcommand{\function}[2]{\colon #1 \rightarrow #2}
\newcommand{\reals}{\ensuremath{\mathbb{R}}\xspace}
\newcommand{\cP}{\ensuremath{\mathrm{P}}\xspace}
\newcommand{\cNP}{\ensuremath{\mathrm{NP}}\xspace}
\newcommand{\cPSPACE}{\ensuremath{\mathrm{PSPACE}}\xspace}
\newcommand{\erclass}{\ensuremath{\exists\mathbb{R}}}
\newcommand{\feq}{\stackrel{\mbox{\tiny def}}{=}}
\newcommand{\E}{\exists}
\newcommand{\PSTR}{\textsc{Pseudoline Stretchability}\xspace}
\DeclareMathOperator{\segm}{seg}
\DeclareMathOperator{\slop}{slope}
\DeclareMathOperator{\vt}{vt}
\DeclareMathOperator{\lvaname}{lva}
\newcommand{\lva}[1]{\lvaname(#1)}
\newcommand{\PT}{\ensuremath{P_\mathrm{T}}\xspace}
\newcommand{\PF}{\ensuremath{P_\mathrm{F}}\xspace}
\newcommand{\PTSAT}{\textsc{Planar 3-Sat}\xspace}
\newcommand{\PCTSAT}{\textsc{Planar Cycle 3-Sat}\xspace}
\newcommand{\PPOITSAT}{\textsc{Positive Planar 1-in-3-Sat}\xspace}
\newcommand{\PPCOITSAT}{\textsc{Positive Planar Cycle 1-in-3-Sat}\xspace}
\newcommand{\OITSAT}{\textsc{1-in-3-Sat}\xspace}
\newcommand{\TSAT}{\textsc{3-Sat}\xspace}
\newcommand{\AGR}{\textsc{Arrangement Graph Recognition}\xspace}
\newcommand{\enquote}[1]{``#1''}
\begin{document}

\maketitle

\begin{abstract}
  It is well known that any graph admits a crossing-free straight-line
  drawing in~$\reals^3$ and that any planar graph admits the same even
  in~$\reals^2$.  For a graph~$G$ and $d \in \{2,3\}$, let
  $\rho^1_d(G)$ denote the smallest number of lines in~$\reals^d$
  whose union contains a crossing-free straight-line drawing of~$G$.
  For $d=2$, the graph~$G$ must be planar.  Similarly, let
  $\rho^2_3(G)$ denote the smallest number of \emph{planes}
  in~$\reals^3$ whose union contains a crossing-free straight-line
  drawing of~$G$.

  We investigate the complexity of computing these three
  parameters and obtain the following hardness and algorithmic
  results.
  \begin{itemize}
  \item For $d\in\{2,3\}$, we prove that deciding whether
    $\rho^1_d(G)\le k$ for a given graph~$G$ and integer~$k$ is
    $\erclass$-complete.
  \item Since $\cNP\subseteq\erclass$, deciding $\rho^1_d(G)\le
    k$ is \cNP-hard for $d\in\{2,3\}$.  On the positive side, we show
    that the problem is fixed-parameter tractable with respect to~$k$.
   \item Since $\erclass\subseteq\cPSPACE$, both $\rho^1_2(G)$
     and $\rho^1_3(G)$ are computable in polynomial space. On the
     negative side, we show that %
     drawings that are optimal with respect to $\rho^1_2$ or $\rho^1_3$
     sometimes require irrational coordinates.
  \item We prove that deciding whether $\rho^2_3(G)\le k$ is
    \cNP-hard for any fixed $k \ge 2$.  Hence, the problem is not
    fixed-parameter tractable with respect to~$k$ unless
    $\cP=\cNP$.
  \end{itemize}
\end{abstract}

\section{Introduction}

As is well known, any graph can be drawn in $\reals^3$ without
crossings so that all edges are segments of straight lines. Suppose
that we are allowed to draw edges only on a limited number of lines.
How many lines do we need for a given graph?

For planar graphs, a similar question makes sense also in $\reals^2$
since planar graphs admit straight-line drawings in $\reals^2$ by the
Wagner--F\'ary--Stein theorem. Let $\rho^1_3(G)$ denote
the minimum number of lines
which can cover a drawing of~$G$ in $\reals^3$. For a planar graph $G$, we denote the corresponding
parameter in $\reals^2$ by $\rho^1_2(G)$.  For example,
Fig.~\ref{fig:T8} shows a drawing of a planar 3-tree that lies in the
union of 14 lines, whereas Fig.~\ref{fig:T8-opt} shows a drawing of the
same graph that lies in the union of only 13 lines.
In this paper,
we restrict ourselves to straight-line and crossing-free drawings of
graphs in $\mathbb{R}^d$ for $d \in \{2,3\}$.  Note that a straight-line
drawing is completely determined by the location of the vertices.
We insist that no two vertices are mapped to the same point.

The study of the parameters $\rho^1_2(G)$ and $\rho^1_3(G)$ was posed
as an open problem by Durocher et al.~\cite{durocher2013note}.
The two parameters are related to several challenging
graph-drawing problems such as small-area or small-volume drawings~\cite{DujmovicW13}, layered or track drawings~\cite{dpw-tlg-DMTCS04},
and drawing graphs with low visual complexity~\cite{s-dgfa-JGAA15}.
Recently, we studied the extremal values of $\rho^1_3$
and $\rho^1_2$ for various classes of
graphs and examined their relations to other characteristics of
graphs~\cite{ChaplickFLRVW20}.
In particular, we showed that there are planar graphs whose
$\rho^1_3$-value is much smaller than their $\rho^1_2$-value.
Determining the exact values of $\rho^1_3(G)$ and $\rho^1_2(G)$
for a particular graph~$G$ seems to be tricky even for trees.

In fact, the setting that we suggested is more
general~\cite{ChaplickFLRVW20}.  Let $1 \le l < d$.
We define the \emph{affine cover number $\rho^l_d(G)$} as the minimum
number of $l$-dimensional planes in~$\mathbb R^d$ such that~$G$ has a
straight-line drawing that is contained in the union of these
planes. We suppose that $l\le2$ as otherwise $\rho^l_d(G)=1$.

In previous work, we have shown that we can focus on $d\le3$ as every
graph~$G$ can be drawn in
3-space as efficiently as in higher dimensions, that is,
$\rho^l_d(G)=\rho^l_3(G)$ for any $d>3$~\cite{ChaplickFLRVW20}.
This implies that, besides the
\emph{line cover numbers} $\rho^1_2(G)$ in 2D and
$\rho^1_3(G)$ in 3D, the only interesting affine cover number is the
\emph{plane cover number} $\rho^2_3(G)$ of~$G$.
Note that $\rho^2_3(G)=1$ if and only if $G$ is planar.
Let $K_n$ denote the complete graph on $n$ vertices. For the smallest
non-planar graph $K_5$, we have $\rho^2_3(K_5)=3$; see
Fig.~\ref{fig:K5}.
\begin{figure}[b]
  \begin{subfigure}[b]{.3\linewidth}
    \centering
    \includegraphics[page=1]{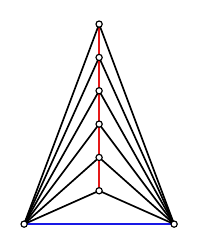}
    \caption{drawing covered by 14 lines}
    \label{fig:T8}
  \end{subfigure}
  \hfill
  \begin{subfigure}[b]{.3\linewidth}
    \centering
    \includegraphics[page=2]{3tree}
    \caption{drawing covered by 13 lines}
    \label{fig:T8-opt}
  \end{subfigure}
  \hfill
  \begin{subfigure}[b]{.32\linewidth}
    \centering
    \includegraphics[page=2]{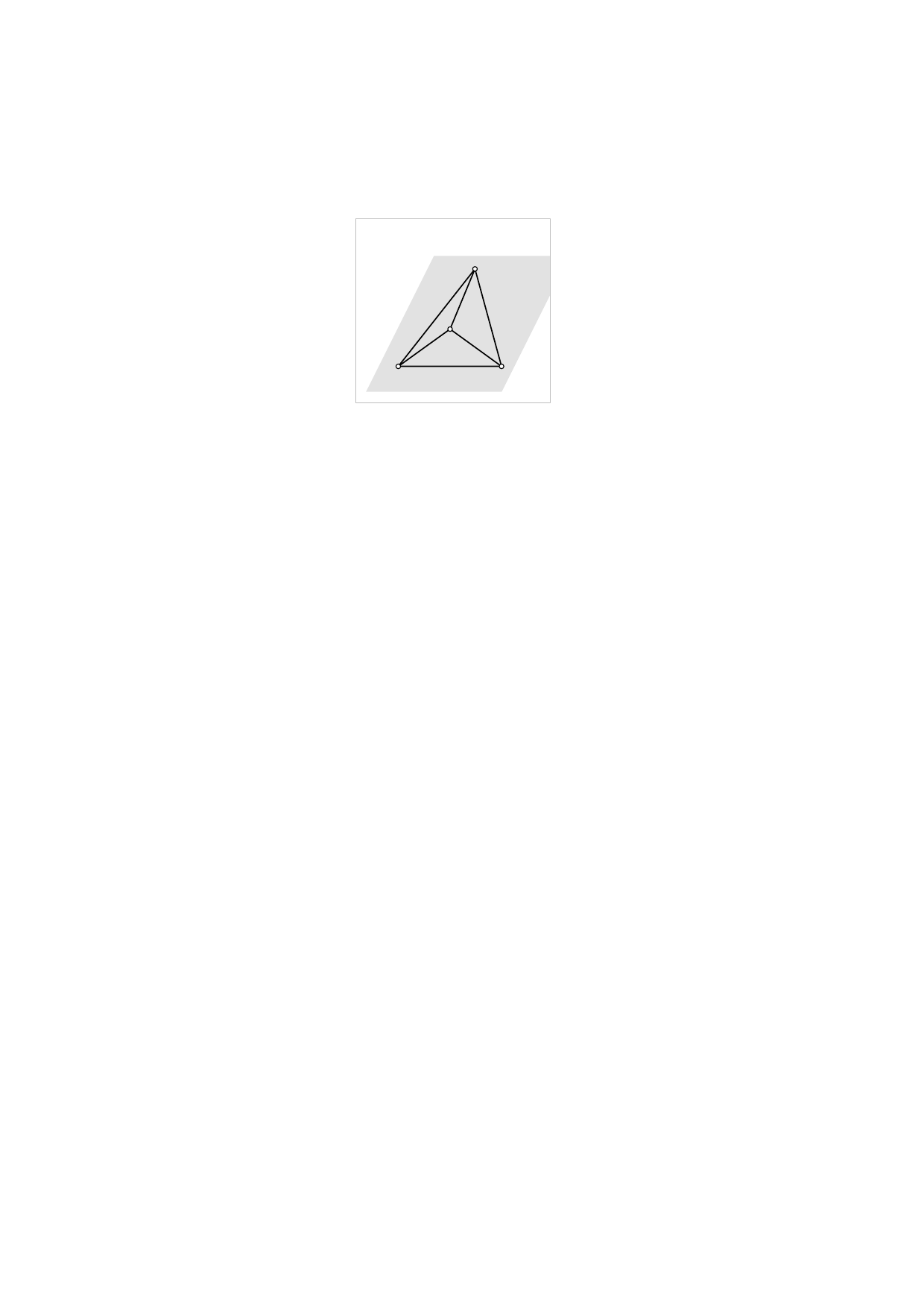}
    \caption{$\rho^2_3(K_5)=3$}
    \label{fig:K5}
  \end{subfigure}
  \caption{Two drawings of the same planar 3-tree in 2D and a drawing
    of~$K_5$ in 3D.  In the 3D drawing~(c), the four white vertices
    lie in a common (gray) plane, whereas the black vertex lies above
    it.}
  \label{fig:K4+K5}
\end{figure}
The parameters $\rho^2_3(K_n)$
are not so easy to determine even for small values of $n$.
We have shown that
$\rho^2_3(K_6)=4$, $\rho^2_3(K_7)=6$, and $6\le\rho^2_3(K_8)\le 7$~\cite{ChaplickFLRVW20}.
Moreover, using Steiner systems we have shown that
$\frac{n^2-n}{12} \le \rho^2_3(K_n)\le \frac{n^2+5n+6}{6}$ for all~$n$.

The present paper focuses on the complexity of computing
affine cover numbers. A good starting point is to observe that, for given
$G$ and $k$, the statement $\rho^l_d(G)\le k$ can be expressed by
a first-order formula about the reals of the form $\exists x_1\ldots\exists x_m\Phi(x_1,\ldots,x_m)$,
where the quantifier-free subformula $\Phi$ is written using the constants $0$ and $1$,
the basic arithmetic operations,
and the order and equality relations. If, for example, $l=1$, then we just have to write
that there are $k$ pairs of points, determining a set $\mathcal{L}$ of $k$ lines,
and that there are $n$ points representing the vertices of $G$ such that
the segments corresponding to the edges of $G$ lie on the lines in $\mathcal{L}$
and do not cross each other. This observation shows that deciding whether or not
$\rho^l_d(G)\le k$ reduces in polynomial time to the decision problem (Hilbert's \emph{Entscheidungsproblem})
for \emph{the existential theory of the reals}. The problems admitting such a reduction
form the complexity class $\erclass$ introduced by Schaefer~\cite{Schaefer09},
whose importance in computational geometry has been recognized recently~\cite{Cardinal15,Matousek14,SchaeferS15}.
In the complexity-theoretic hierarchy, this class occupies a position between
\cNP and \cPSPACE.
It possesses natural complete problems like the decision version of
the rectilinear crossing number~\cite{Bienstock91} and
the recognition of segment intersection graphs~\cite{KratochvilM94},
unit disk graphs~\cite{KangM12}, and point visibility graphs~\cite{CardinalH17}.
Related questions deal with the realizability of abstract simplicial
complexes
\cite{mtw-hescrd-JEMS11,mstw-e3sd-JACM18,akm-gecerc-arXiv21},
polyhedral complexes~\cite{akklsvw-dgps-SoCG21}, graphs, and
hypergraphs \cite{erssw-rghtp-GD19} in three and higher dimensions.

Below, we summarize our results on the computational complexity of the
affine cover numbers.

\subparagraph*{The complexity of the line cover numbers in 2D and 3D.}
We begin by showing that it is $\erclass$-hard to compute, for a given graph~$G$, its line
cover numbers $\rho^1_2(G)$ and $\rho^1_3(G)$; see Section~\ref{s:rho1213}.

Our proof uses some ingredients from a paper of Durocher et
al.~\cite{durocher2013note} who showed that it is \cNP-hard to compute
the \emph{segment number} $\segm(G)$ of a graph~$G$.
This parameter was introduced by Dujmovi{\'c} et al.~\cite{desw-dpgfs-CGTA07}
as a measure of the visual complexity of a planar graph.
A \emph{segment} in a straight-line drawing of a graph~$G$
is an inclusion-maximal connected path of edges of $G$ lying on a line, and the \emph{segment number} $\segm(G)$ of a planar graph $G$ is the minimum
number of segments in a straight-line drawing of $G$ in the plane.
Note that while $\rho^1_2(G)\le\segm(G)$, the two parameters can be
far apart as demonstrated, for instance, by a graph with $m$ isolated edges.
For connected graphs, we have shown in previous work~\cite{ChaplickFLRVW20}
that $\segm(G) \in O(\rho^1_2(G)^2)$
and that this bound is optimal as
there exist planar triangulations with $\rho^1_2(G) \in O(\sqrt n)$ and
$\segm(G) \in \Omega(n)$.
Despite this difference, we follow Durocher et al.~\cite{durocher2013note} to some
extent in that we also reduce from \AGR (see Theorem~\ref{thm:rho12-hard}).

Another noteworthy related result is the $\erclass$-hardness of computing the
\emph{slope number} $\slop(G)$ of a planar graph $G$, which has recently been
established by Hoffmann~\cite{Hoffmann17}.  The value of $\slop(G)$ is
equal to the minimum possible number of slopes in a straight-line drawing of $G$.
It should be noted that, whereas $\rho^1_2(G)\ge\slop(G)$, the two parameters are generally
unrelated. For example, if $G$ is a nested-triangle graph, then
the former parameter is linear while the latter is bounded by a constant.

\subparagraph*{Parameterized complexity of computing the line cover numbers in 2D  and 3D.}
It follows from the inclusion $\cNP\subseteq\erclass$ that the
decision problems~\enquote{$\rho^1_2(G)\le k$?}
and~\enquote{$\rho^1_3(G)\le k$?} are \cNP-hard
if $k$ is given as a part of the input.
On the positive side, in Section~\ref{s:fpt}, we show that both
problems are fixed-parameter tractable (FPT).
To this end, we first describe a linear-time kernelization procedure
that reduces the given graph to one of size $O(k^4)$. Then, in $k^{O(k^2)}$ time,
we carefully solve the problem on this reduced
instance by using the exponential-time decision procedure for the existential theory of the
reals by Renegar~\cite{Renegar92a,Renegar92b,Renegar92c} (see
Theorem~\ref{thm:renegar} in Section~\ref{s:rho1213}) as a
subroutine.
To the best of our knowledge, this is the first application of
Renegar's algorithm
for obtaining an FPT result, in particular, in the area of graph drawing where
FPT algorithms are widely known.
For comparison note that, for every fixed $k$, the decision
problem~\enquote{$\slop(G)\le k$?} is in \cNP~\cite{Hoffmann17}.

\subparagraph*{Realizability of $\rho^1_d$-optimal drawings.}
Since $\erclass$ belongs to $\cPSPACE$ (as shown by
Canny~\cite{Canny88}), the parameters $\rho^1_d(G)$ for both $d=2$ and
$3$ are computable in polynomial space. We show,
however, that \emph{constructing} a $\rho^1_2$-optimal drawing of a given planar graph $G$
can be an unfeasible task by the following reason:
there is a planar graph $G$ such that every $\rho^1_2$-optimal drawing
of $G$ requires irrational coordinates (see Theorem~\ref{thm:irrational-coord-graph}).

This result shows that, even if a graph $G$ is known to be drawable on $k$ lines,
it may happen that $G$ does not admit a $k$-line drawing on the integer grid.
Nevertheless, in FPT time we are always able to produce a \emph{combinatorial description}
of an optimal drawing (see Theorem~\ref{thm:fpt-main^2}).

\subparagraph*{The complexity of the plane cover number.}
Though the decision problem~\enquote{$\rho^2_3(G) \le k$?}
also belongs to $\erclass$, its complexity status is different from
that of the line cover numbers.  In
Section~\ref{sec:complexity-rho23}, we establish the \cNP-hardness of
deciding whether $\rho^2_3(G)\le k$ for any fixed $k\ge2$, which
excludes an FPT algorithm for this problem unless $\cP=\cNP$.
To show this, we first prove \cNP-hardness of \PPCOITSAT (a new
problem of \PTSAT type), which we think is of independent
interest.

In subsequent work, Pilz~\cite{p-p3sat-DMTCS19} showed that another variant
of this type is NP-hard, namely the restriction of \PTSAT to instances
where a Hamiltonian cycle can be added to the bipartite
variable--clause graph without violating planarity.  While in his
case, the cycle goes through all clause vertices and then through all
variable vertices, our restriction only asks for a cycle through all
clause vertices.  On the other hand, Pilz showed that his variant
remains NP-hard for the \textsc{Monotone} variant (where no clause may
combine negative and positive literals), while we show this for the
\textsc{Positive 1-in-3} variant (where all literals are positive and
the task is to decide whether there exists an assignment of truth
values to the variables such that exactly one variable in each clause
is true).

\subparagraph*{Weak affine cover numbers.}
We previously defined the \emph{weak affine cover number}
$\pi^l_d(G)$ of a graph~$G$ similarly to $\rho^l_d(G)$ but under the weaker
requirement that the $l$-dimensional planes in $\reals^d$ whose number
has to be minimized contain the \emph{vertices} (and not
necessarily the edges) of~$G$~\cite{ChaplickFLRVW20}.
Based on our combinatorial characterization of~$\pi^1_3$
and~$\pi^2_3$~\cite{ChaplickFLRVW20}, we show in
Section~\ref{s:weak-appendix} that the decision
problem~\enquote{$\pi^l_3(G)\le2$?} is \cNP-complete for $l \in \{1,2\}$.
Recently, Erhardt~\cite{e-wlcn3d-MTh21} strengthened our result for
$l=1$.  He showed that deciding whether $\pi^1_3(G)\le2$ is NP-hard
even if $G$ has maximum degree~$5$ or if $G$ is planar and has maximum
degree~$6$.  For general graphs, his result is tight because
Matsumoto~\cite{m-bvla-JGT90} has proven that for every
graph~$G \ne K_5$ of maximum degree~$4$, it holds that
$\pi^1_3(G)\le2$.  Erhardt also
showed that, for any constant~$c$, the problem of deciding whether
$\pi^1_3(G)\le c$ is FPT parameterized by the treewidth of~$G$.

Further, we prove that it is \cNP-hard to approximate
$\pi^l_3(G)$ within a factor of $O(n^{1-\epsilon})$, for any
$\epsilon>0$.
Recently, Biedl et al.~\cite{bfw-lpcnr-GD19} showed that it is
\cNP-complete to decide whether $\pi^1_2(G)=2$ for a given planar
graph~$G$.  Firman et al.~\cite{flsw-ewlc2-GD18} gave formulations of
the problem as an integer linear program and as an instance of Boolean
satisfiability and compared them on test data; they also examined the
drawability of several special graphs.

Given a graph~$G$, we write $V(G)$ for its vertex set and $E(G)$ for
its edge set.  For any dimension~$d>1$ and two points $x$ and $y$
in~$\reals^d$, we write $[x,y]$ for the line segment that connects~$x$
and~$y$, and we write $xy$ for the line that goes through~$x$ and~$y$.

\section{Computational Hardness of the Line Cover Numbers}
\label{s:rho1213}

In this section, we show that, for a given graph~$G$ and
an integer~$k$, the two decision problems \enquote{$\rho^1_2(G) \le k$?}
and~\enquote{$\rho^1_3(G) \le k$?} are~$\erclass$-complete.
The $\erclass$-hardness results are often established by a reduction from
the \PSTR problem: Given an arrangement
of pseudolines in the projective plane, decide whether it is \emph{stretchable}, that is,
equivalent to an arrangement of lines~\cite{Mnev85,Mnev88}.
Our reduction is based on an argument of Durocher et
al.~\cite{durocher2013note} who designed a reduction of
the \AGR problem, defined below, to the problem of
computing the segment number of a graph.

A \emph{simple line arrangement} is a set $\mathcal{L}$ of $k$ lines in $\mathbb R^2$
such that each pair of lines crosses each other in a point, their
\emph{crossing point}, and no three lines share a common point.
In the following, we assume that every line arrangement is simple.
We define the \emph{arrangement graph} for a (simple) line arrangement
as follows~\cite{BoseEW03} (see Fig.~\ref{fig:arrangement-graph-orig}):
The vertices correspond to the crossing points of the lines and two vertices
are adjacent in the graph if and only if they are adjacent along some line.
Note that the arrangement graph for a simple line arrangement consisting of~$k$
lines has $\binom{k}{2}=k(k-1)/2$ vertices and $k(k-2)$ edges
(the latter because each of the $k$ lines hosts $k-2$ edges between
$k-1$ crossing points).
The \AGR problem is to decide whether a given graph is the arrangement
graph of some line arrangement.

\begin{figure}[b]
    \newcommand{\figArrangementPaths}{
        \path[name path=l1] (-0.1,0) -- (5.25,0);
        \path[name path=l2] (0,-0.5) -- (4,2.3);
        \path[name path=l3] (1.65,-0.85) -- (2.45,2.75);
        \path[name path=l4] (1.3,2.75) -- (5.15,-0.65);

        \path[name intersections={of={l1 and l2},by=v1}];
        \path[name intersections={of={l1 and l3},by=v2}];
        \path[name intersections={of={l1 and l4},by=v3}];
        \path[name intersections={of={l2 and l3},by=v4}];
        \path[name intersections={of={l2 and l4},by=v5}];
        \path[name intersections={of={l3 and l4},by=v6}];
    }
    \newcommand{\figArrangementLines}{
        \draw (-0.1,0) -- (5.25,0);
        \draw (0,-0.5) -- (4,2.3);
        \draw (1.65,-0.85) -- (2.45,2.75);
        \draw (1.3,2.75) -- (5.15,-0.65);
    }
    \begin{subfigure}[t]{.45\textwidth}
        \centering
        \begin{tikzpicture}[scale=.9]
            \figArrangementPaths
            \begin{scope}[color=green!70!black,opacity=1,ultra thick]
                \foreach \x in {v1,v2,v3,v4,v5,v6}
                    \fill[draw] (\x) circle(2pt);
                \draw (v1) -- (v2);
                \draw (v1) -- (v4);
                \draw (v2) -- (v3);
                \draw (v2) -- (v4);
                \draw (v3) -- (v5);
                \draw (v5) -- (v6);
                \draw (v4) -- (v5);
                \draw (v4) -- (v6);
            \end{scope}
            \figArrangementLines
        \end{tikzpicture}
        \caption{A line arrangement (black lines) with the
          corresponding arrangement graph $G$ (in green).}
        \label{fig:arrangement-graph-orig}
    \end{subfigure}
    \hfill
    \begin{subfigure}[t]{.5\textwidth}
        \centering
        \begin{tikzpicture}[scale=.9]
            \figArrangementPaths
            \begin{scope}[color=blue!60!white,opacity=1,ultra thick]
            \foreach \x in {v1,v2,v3,v4,v5,v6}
                \fill[draw] (\x) circle(2pt);
            \draw (v1) -- (v2);
            \draw (v1) -- (v4);
            \draw (v2) -- (v3);
            \draw (v2) -- (v4);
            \draw (v3) -- (v5);
            \draw (v5) -- (v6);
            \draw (v4) -- (v5);
            \draw (v4) -- (v6);
            \foreach \x/\y in {v1/v2,v1/v4,v2/v4,v3/v5,
                               v3/v2,v5/v4,v6/v4,v6/v5}
                \fill[draw] (\x) -- ($(\x)!-0.5cm!(\y)$) circle(2pt);
            \end{scope}
            \figArrangementLines
        \end{tikzpicture}
        \caption{The modified arrangement graph $G'$ (in blue) is
          obtained from~$G$ by adding two leaves on each line.}
        \label{fig:arrangement-graph-modified}
    \end{subfigure}

    \caption{A simple line arrangement of four lines (depicted in black) with
        the corresponding arrangement graph and the modification used in the
        proof of Theorem~\ref{thm:rho12-hard}.}
    \label{fig:arrangement-graph}
\end{figure}

Bose et al.~\cite{BoseEW03} showed that this problem is \cNP-hard
by reduction from a version of \PSTR for the Euclidean plane,
whose \cNP-hardness was proved by Shor~\cite{Shor91}.
It turns out that \AGR is actually an $\erclass$-complete problem~\cite[page 212]{Eppstein14}.
This stronger statement follows from the fact
that the Euclidean \PSTR is $\erclass$-hard as well as the original projective version~\cite{Matousek14,Schaefer09}.

\begin{theorem}\label{thm:rho12-hard}
  Given any planar graph~$G$ and any integer~$k$,
  the decision problems~\enquote{$\rho^1_2(G)\le k$?}
  and~\enquote{$\rho^1_3(G)\le k$?} are~$\erclass$-hard.
\end{theorem}

\begin{proof}
We first treat the 2D case.
We show hardness by reduction from \AGR.
Let~$G$ be an instance of this problem consisting of more than one vertex; otherwise we accept.
Here and below, by \emph{acceptance} (resp.\ \emph{rejection}) we mean mapping an instance of \AGR
to a fixed yes-instance (resp.\ no-instance) of the decision problem
\enquote{$\rho^1_2(G)\le k$?}.
If~$G$ is an arrangement graph, then there must be an integer $\ell \ge 3$ such that $G$ consists of
$\ell(\ell-1)/2$ vertices and $\ell(\ell-2)$ edges, and each of its vertices has degree $2$, $3$, or $4$.
So, we first check these simple conditions to determine $\ell$ and reject $G$ if one of them fails.
Let $G'$ be the graph obtained from~$G$ by adding one tail
(that is, an edge to a new degree-$1$ vertex) to
each degree-$3$ vertex and two tails to each degree-$2$ vertex, see
Fig.~\ref{fig:arrangement-graph-modified}.
Thus, every vertex of~$G'$ has degree~$1$ or~$4$.
Note that, if $G$ is an arrangement graph, then there are exactly $2\ell$ tails in~$G'$
(two for each line) -- if this is not true we can already safely reject $G$.
We now pick $k = \ell$, and show that~$G$ is an arrangement graph if and only if $\rho^1_2(G')
\leq k$.

For one %
direction, assume that $G$ is an arrangement graph.  By our
choice of~$k$, it is clear that~$G$ corresponds to a line arrangement
of~$k$ lines.
Note that all edges of $G$ lie on these $k$ lines and the tails of $G'$ can be
added without increasing the number of lines.
Hence, $\rho^1_2(G') \leq k$.

For the other direction, assume that $\rho^1_2(G') \leq k$, and let
$\gamma'\function{V(G')}{\reals^2}$ be a crossing-free straight-line
drawing of~$G'$ 
on a set $\mathcal{L}$ of $\rho^1_2(G')$ lines.  Let $\gamma$ be the
restriction of $\gamma'$ to $V(G)$.  The graph~$G'$ contains
$\binom{k}{2}$ degree-$4$ vertices (which are exactly the vertices in
$V(G)$).  As each of these vertices lies on the crossing point of two
lines in~$\mathcal{L}$, we need $k$ lines to get enough crossing
points, that is, $|\mathcal{L}| = k$, every two lines in $\mathcal{L}$
cross each other, and no three lines in $\mathcal{L}$ cross in the
same point.  Thus, $\mathcal{L}$ is a simple line arrangement. Let
$A_\mathcal{L}$ be the geometric realization of the arrangement graph
of $\mathcal{L}$, that is, $V(A_\mathcal{L})$ consists of the
$\binom{k}{2}$ crossing points of lines in $\mathcal{L}$. The mapping
$\gamma$ is a bijection from $V(G)$ to $V(A_\mathcal{L})$, and it
suffices to show that $\gamma$ is an isomorphism from~$G$ to~$A_\mathcal{L}$.
Indeed, let~$u$ and $v$ be vertices that are adjacent in~$G$.
Since~$\gamma'$ is a straight-line drawing of~$G'$ on~$\mathcal{L}$,
the line segment $[\gamma(u),\gamma(v)]$ is a segment between
two crossing points on a line in $\mathcal{L}$.  This segment does not
contain any other crossing point because any such point would be the
image $\gamma(w)$ of some vertex $w\not\in\{u,v\}$ of~$G$ (which would
contradict $u$ and $v$ being adjacent in~$G$).  It follows that
$\gamma(u)$ and $\gamma(v)$ are adjacent in $A_\mathcal{L}$.  Since
$G$ and $A_\mathcal{L}$ have the same number $k(k-2)$ of edges,
$\gamma$ is an isomorphism.

Now we turn to 3D.
Let $G$ be a planar graph (that passed all our initial rejection tests) and let $G'$ be the corresponding graph augmented as above.
We show that~$\rho^1_3(G')\le k$ if and only if~$\rho^1_2(G') \le k$, which yields that
deciding~$\rho^1_3(G')$ is also $\erclass$-hard.
For the first direction, if~$G'$ cannot be drawn on~$k$ lines in 3-space, then clearly it neither can in 2-space.
For the second direction, assume that~$G'$ can be drawn on~$k$ or fewer lines in 3-space.
Since $G'$ has $\binom{k}{2}$ vertices of degree~$4$, each of them
must be a crossing point of two lines. It follows that we have exactly~$k$ lines and that each of the $k$ lines
crosses all the others. Fix any two of the lines and consider the plane
that they determine.  Then all $k$ lines must lie in this plane, which
shows that also~$\rho^1_2(G') \leq k$.
\end{proof}

It remains to notice that the decision problems under consideration
lie in the complexity class~\erclass. To this end,
we transform the inequalities $\rho^l_d(G)\le k$ into first-order existential expressions about the reals.
More precisely, we call a first-order formula \emph{existential} if it is written in prenex normal form,
that is, as a prefix of quantifiers applied to a quantifier-free part,
and the prefix consists only of existential quantifiers.
Though this transformation is direct and elementary, we give some details in the proof of the following
lemma, as they are relevant also to the proof of Theorem~\ref{thm:fpt-main^2} in Section~\ref{s:fpt}.

\begin{lemma}
  \label{lem:inER}
    Given any graph~$G$ with at least one edge and any positive
    integer~$k$, the following decision problems belong to the
    complexity class~$\erclass$:
    \begin{enumerate}%[leftmargin=*,noitemsep,label=(\alph*),topsep=.5ex]
    \item \label{enum:rho12-ER}
    \enquote{$\rho^1_2(G)\le k$?} (assuming that~$G$ is planar);
    \item
    \enquote{$\rho^1_3(G)\le k$?};
    \item
    \enquote{$\rho^2_3(G)\le k$?}.
    \end{enumerate}
\end{lemma}

\begin{proof}
  We have to express the conditions $\rho^l_d(G) \leq k$ in the
  formalism of the existential theory of the reals.  This can be done
  rather straightforwardly applying elementary tools of Cartesian
  geometry.  We provide the necessary details for the parameter
  $\rho^1_2(G)$ and sketch how these arguments need to be modified for
  the other two parameters.

Given three points $a=(x_1,y_1)$, $b=(x_2,y_2)$, $c=(x_3,y_3)$ in the plane, let
$$
\chi(a,b,c)=\left|
  \begin{array}{ccc}
    x_1&y_1&1\\
    x_2&y_2&1\\
    x_3&y_3&1
  \end{array}
\right|
$$
be the scalar triple product of three 3-dimensional vectors
$(x_1,y_1,1)$, $(x_2,y_2,1)$, and $(x_3,y_3,1)$. As is well known,
the following conditions are equivalent:
\begin{itemize}%[noitemsep,topsep=.7ex]
\item
$\chi(a,b,c)>0$;
\item
the sequence of vectors $(x_1,y_1,1)$, $(x_2,y_2,1)$, and $(x_3,y_3,1)$
forms a right-handed system in~$\reals^3$;
\item
$a\ne b$, and the point $c$ lies in the left half-plane with respect to the oriented line $\vec{ab}$;
\item
the points $a$, $b$, and $c$ are pairwise distinct, non-collinear, and occur counterclockwise in the circumcircle
of the triangle~$\triangle abc$ in this order.
\end{itemize}
Moreover, $\chi(a,b,c)=0$ if and only if the points $a$, $b$, and $c$
are collinear, including the case that some of them coincide.
Furthermore, the point $a$ lies on the line segment~$[b,c]$ if and
only if
the following relation is fulfilled:
\begin{eqnarray*}
B(a,b,c) \;\feq\; \chi(a,b,c)=0&\wedge&
\left((x_2-x_1)^2+(y_2-y_1)^2\le(x_3-x_2)^2+(y_3-y_2)^2\right)\\
&\wedge& \left((x_3-x_1)^2+(y_3-y_1)^2\le(x_3-x_2)^2+(y_3-y_2)^2\right),
\end{eqnarray*}
that is, the points $a$, $b$, and $c$ are collinear and the sum of the
distances from~$a$ to~$b$ and from~$a$ to~$c$ does not exceed the
distance of~$b$ and~$c$.

Two line segments $[a,b]$ and $[c,d]$ do not intersect
if and only if
the points $a$, $b$, $c$, and~$d$ satisfy the following relation:
\begin{align*}
  D(a,b,c,d) \;\feq\;& \;\big(\chi(a,b,c)\chi(a,b,d)>0\big) \;\vee\;
  \big(\chi(c,d,a)\chi(c,d,b)>0\big) \;\vee{}\\
  & \;\big(\chi(a,b,c)=\chi(a,b,d)=0\wedge \neg B(a,c,d) \wedge
  \neg B(b,c,d)\wedge \neg B(c,a,b)\wedge \neg B(d,a,b)\big),
\end{align*}
that is, $c$ and $d$ are on the same side of the line~$ab$,
or $a$ and $b$ are on the same side of the line~$cd$,
or $[a,b]$ and $[c,d]$ are disjoint segments lying on the same line.

We assume that the given graph~$G$ has the vertex set $\{1,\dots,n\}$, and
we let~$E$ denote the edge set of~$G$.
We have to express the fact that there are $n$ pairwise distinct points $v_1,\ldots,v_n$
lying on $k$ lines $\ell_1,\ldots,\ell_k$ that determine a straight-line drawing of the graph $G$.
Each $\ell_i$ can be represented by a pair of points $p_i$ and $q_i$ lying on this line.
Our existential statement about the reals begins, therefore, with the quantifier prefix
$\E v_1\ldots\E v_n\,\E p_1\E q_1\ldots \E p_k\E q_k$, where quantification $\E a$ over a point $a=(x,y)$
means the quantifier block $\E x\E y$.
Given this prefix of quantifiers, we use the following subformula to
express that every edge of $G$ lies on one of the $k$ lines:
$$
\bigwedge_{i\ne j}v_i\ne v_j \quad\wedge\quad \bigwedge_{l=1}^kp_l\ne q_l
\quad\wedge\quad \bigwedge_{\{i,j\}\in E} \;\; \bigvee_{l=1}^k
\big(B(v_i,p_l,q_l)\wedge B(v_j,p_l,q_l)\big),
$$
where $a\ne b$ for points $a=(x_1,y_1)$ and $b=(x_2,y_2)$ is an
abbreviation for $x_1\ne y_1\vee x_2\ne y_2$.

It remains to ensure that there are no
edge crossings.  To this end, we simply write
\[
  \bigwedge_{\{i,j\},\{i',j'\}\in E,\,\{i,j\}\cap\{i',j'\}=\emptyset}
  D(v_i,v_j,v_{i'},v_{j'}) \quad \wedge
  \bigwedge_{\{i,j\},\{i,j'\}\in E,\,j\ne j'} \big(\neg
B(v_{j'},v_i,v_j)\wedge\neg B(v_j,v_i,v_{j'})\big).
\]

This finishes our description of an existential first-order sentence
about the reals that encodes the condition $\rho^1_2(G) \le k$.  In this
sentence, each polynomial has total degree at most~$4$ and each
variable has an integer coefficient whose absolute value is bounded
by~$2$.

In order to encode ``$\rho^1_3(G)\le k$'' or ``$\rho^2_3(G)\le k$'',
the polynomials will have slightly larger total degree and the
variables will have coefficients with slightly larger absolute values.
An obvious difference to the case of $\rho^1_2(G)$ that we treated
above is that now we need three real coordinates to represent a point
and four coefficients $u_1,\dots,u_4$ to specify a plane by its
canonical equation $u_1x_1+u_2x_3+u_3x_3=u_4$.
The condition that a point $(x_1,x_2,x_3)$ lies on
this plane can be expressed by plugging its coordinates into the
plane's equation.

Finally, the condition that two line segments in 3D space do not intersect
can be expressed either by using the fact that the orientation of a triple of vectors
with a common origin is determined by the determinant of the matrix of their coordinates
(allowing a simple reduction to the 2D case)
or by using an explicit arithmetic formula for the crossing point of
two lines; see, for example,~\cite[Section~V.3]{graphics-gems-book}.
\end{proof}

The proof of Lemma~\ref{lem:inER} implies
the following.  Recall that a \emph{first-order sentence} is a formula
in first-order logic where no free variables occur.

\begin{corollary}\label{cor:PhiGk}
  Let $(l,d) \in \{(1,2),(1,3),(2,3)\}$.  Given any graph $G$ with
  $n$ vertices and $m\ge1$ edges, and any positive integer $k$,
  we can construct, in time $O(n^2+m^2)$,
  an existential first-order sentence~$\Phi_{G,k}$ about the reals such that
  \begin{itemize}
  \item
    $\Phi_{G,k}$ involves $O(n^2+m^2)$ polynomials in at most
    $d \cdot (n+(l+1) \cdot k)$
    variables, each of constant total degree and with integer
    coefficients of constant absolute value, and
  \item
    $\Phi_{G,k}$ is true if and only if $\rho^l_d(G)\le k$.
  \end{itemize}
\end{corollary}

Note that the constant hidden in the big-$O$ notation does not depend
on~$k$ because, if $k \ge m$, we can directly set $\Phi_{G,k}$ to true
since in this case we can easily cover $m$ edges with $k$ lines or
planes.  The same applies to the time bound in
Theorem~\ref{thm:line-cover-kernel} below.

\section{Fixed-Parameter Tractability of the Line Cover Numbers}
\label{s:fpt}

In this section, we show that, for a graph $G$ with $m\ge1$ edges
and a positive integer~$k$,
the decision problems~\enquote{$\rho^1_2(G) \leq k$?}
and~\enquote{$\rho^1_3(G) \leq k$?}
can both be solved in FPT time parameterized by~$k$.
(The cases $k=0$ and $m=0$ are not interesting.)
Given a graph~$G$ and a drawing~$\Gamma$ of~$G$, we
define~$\mathcal{L}_\Gamma$ to be the arrangement of the supporting
lines of the line segments in~$\Gamma$.
In contrast to Section~\ref{s:rho1213},
we now consider line arrangements in~$\mathbb R^2$ and in~$\mathbb R^3$
that are not necessarily simple, that is,
crossings of more than two lines in a point are allowed.
If $|\mathcal L_\Gamma| = k$, we call $\Gamma$ a \emph{$k$-line drawing}.
We also call $\mathcal L_\Gamma$ a line cover of $G$;
for this to be true, we move any isolated vertices onto distinct points of
some line in $\mathcal L_\Gamma$, avoiding other vertices and edges of~$\Gamma$.

Our FPT algorithm follows from a simple \emph{kernelization/pre-processing}
procedure in which we reduce a given instance $(G,k)$ to a reduced instance
$(H,k)$ where $H$ has $O(k^4)$ vertices and edges,
and $G$ has a $k$-line drawing if and only if $H$ does as well.
After this reduction,
we can apply any decision procedure for the existential theory of the reals
since we have shown in Lemma~\ref{lem:inER} that the two $k$-line drawing problems
are indeed members of this complexity class.
Our kernelization approach is
given as Theorem~\ref{thm:line-cover-kernel} and our FPT result follows as
described in Corollary~\ref{cor:line-cover-fpt}.

\begin{theorem}\label{thm:line-cover-kernel}
  For~$d \in \{2,3\}$, given any graph~$G$ with~$n$ vertices
  and~$m\ge1$ edges, and any positive integer~$k$,
  the decision problem~\enquote{$\rho^1_d(G) \leq k$?}
  admits a kernel of size $O(k^4)$. Specifically, in~$O(n+m)$ time
  one can compute a graph $H$ with at most
  $1.1k^4$ vertices and at most $2k^4$ edges such that
  $\rho^1_d(G) \leq k$ if and only if $\rho^1_d(H) \leq k$.
\end{theorem}

\begin{proof}
For our kernel of~$G$, we construct a graph~$H$ with~$O(k^4)$ many vertices and
edges such that~$\rho^1_d(H) \leq k$ if and only if~$\rho^1_d(G) \leq k$.
Our idea is based on the observation that, in any drawing~$\Gamma$
of~$G$, the total number of bends
in ``long'' paths of degree-$2$ vertices is naturally
bounded by the number of crossing points of the underlying line
cover~$\mathcal{L}_\Gamma$.
This fact allows us to sufficiently shrink long subpaths of
degree-$2$ vertices without changing the drawability of the graph.

We compute our kernel in two steps.
First, we remove from~$G$ all path components.
In each of the remaining connected components, we shrink every
path consisting only of vertices of degree at most~$2$ down to
length~$\binom{k}{2}$ by performing sufficiently many edge
contractions. We can do this modification in time~$O(n+m)$ by using,
e.g., breadth-first search to first measure the length of such paths
and then to shrink them.  Let~$G'$ be the resulting graph.
Second, if $G'$ has at most $1.1k^4$ vertices and at most $2k^4$ edges,
we set~$H=G'$; otherwise we set~$H=K_{1,2k+1}$ (which we use as a fixed no-instance of small size).
Thus, the size of~$H$ is as required.

To see that~$H$ is a kernel, first observe that if~$G$ is drawable
on~$k$ lines, then~$G'$ is drawable in the same way since every
shrunken path has enough vertices to map to positions where the
drawing of the original path makes bends; note that bends can happen
only at crossing points and that their number is bounded
by~$\binom{k}{2}$.

On the other hand, if~$G'$ is drawable on~$k$ lines, then also~$G$ has
a~$k$-line drawing that we can obtain by appropriately subdividing
shrunken paths in the $k$-line drawing of~$G'$ and by placing any path
components on a single line such that they do not interfere with the
drawings of the other components.

Moreover, if~$G'$ is drawable on~$k$ lines, then also~$H$ is~-- as
then the size of~$G'$ is less than~$4k^4$, which implies~$H=G'$.
To see the bound on the size, first observe that the number of
vertices of degree larger than~$2$ is naturally bounded
by~$\binom{k}{2}$, our upper bound on the number of crossing points.
All remaining vertices, whose degree is at most~$2$, form edge-disjoint
cycles and (maximal) paths.  By our construction, the length of each
such cycle or path is bounded by~$\binom{k}{2}$.  Each of the cycles
visits at least three crossing points; thus, the number of cycles
is at most~$\binom{k}{2}/3$.  Each of the paths is adjacent to at
least one vertex of degree larger than~$2$ (otherwise it is a path
component, but we have already removed them); thus, the number of
paths is bounded by the total degree of the higher-degree vertices.
Since the higher-degree vertices have to lie on crossing points and
each line contains at most~$k-1$ higher-degree vertices,
their total degree is at most~$k \cdot 2(k-1)$.
In summary, the total number of vertices of~$G'$ is at
most~$(1+\binom{k}{2}/3+2k(k-1))\cdot\binom{k}{2}<1.1k^4$.  The total
number of edges of~$G'$ is the total vertex degree of~$G'$ over~$2$,
that is, at most~$(2k(k-1) + 2\cdot1.1 k^4)/2 < 2k^4$.

Finally, if~$H$ is drawable on~$k$ lines, then~$H=G'$
as~$K_{1,2k+1}$ cannot be drawn on~$k$ lines. Thus,~$G'$ and,
consequently,~$G$ are also drawable on~$k$ lines.  This yields, as
required, that $\rho^1_d(H) \leq k$ if and only
if~$\rho^1_d(G) \leq k$.  In other words,~$H$ is indeed a kernel
of~$G$.
\end{proof}

Note that, for computing the plane cover number~$\rho^2_3$, we cannot
get a polynomial kernel (assuming $\cP \ne \cNP$) as we show that it
is NP-hard to decide, for a given graph~$G$, whether $\rho^2_3(G)=2$
(see Theorem~\ref{thm:rhoeq2}).

Let~$G$ be a graph of~$n$ vertices and~$m$ edges, and let~$H$ be a kernel of size~$O(k^4)$ as obtained by Theorem~\ref{thm:line-cover-kernel} from~$G$ in time~$O(n+m)$.
By Corollary~\ref{cor:PhiGk}, %
the statement $\rho^1_d(H)\le k$ can be expressed
by a statement~$\Phi$ in the existential first-order
theory of the reals using $O(k^4)$ first-order variables and
$k^{O(1)}$ polynomial inequalities, each of constant total degree
and with coefficients of constant absolute value.
Furthermore, such a statement~$\Phi$ can be computed in time~$k^{O(1)}$.
If we directly apply the following decision procedure of Renegar
to~$\Phi$, we obtain an algorithm for deciding
for the original graph~$G$ whether $\rho^1_d(G)\le k$.

\begin{theorem}[Renegar~\cite{Renegar92a,Renegar92b,Renegar92c}]
  \label{thm:renegar}
  Given any existential sentence $\Phi$ about the reals, one can
  decide whether $\Phi$ is true or false in time
  \[
    (L\log L\log\log L) \cdot (PD)^{O(N)},
  \]
  where~$N$ is the number of variables occurring in~$\Phi$,~$P$ is the
  number of polynomials involved in~$\Phi$, $D$ is the maximum total
  degree over the polynomials in~$\Phi$, and~$L$ is the maximum length
  of the binary representation over the coefficients of the polynomials
  in~$\Phi$.
\end{theorem}
Using Renegar's result, our decision algorithm runs in
$k^{O(k^4)} + O(n+m)$ time. This already shows that deciding
$\rho^1_d(G)\le k$ is in FPT.  We now reduce the exponent in the
running time to $O(k^2)$.

\begin{corollary}\label{cor:line-cover-fpt}
For~$d \in \{2,3\}$, given any graph $G$ with~$n$ vertices and~$m\ge1$
edges, and any positive integer~$k$,
the decision problem~\enquote{$\rho^1_d(G) \leq k$?} can be solved
in~$k^{O(k^2)} + O(n+m)$ time, that is, in FPT time parameterized by~$k$.
\end{corollary}
\begin{proof}
First, we apply to the given graph~$G$ the kernelization procedure
from Theorem~\ref{thm:line-cover-kernel} to obtain a reduced graph $H$.
By Theorem~\ref{thm:line-cover-kernel}, $H$
has at most $1.1k^4$ vertices. %
At most $\binom{k}{2}$
of these vertices are mapped to crossing points of the
lines.  Hence we enumerate all possible $\binom{1.1k^4}{\binom{k}{2}}
\in k^{O(k^2)}$ subsets that can be mapped to crossing points.
Such a subset must contain all vertices of degree more than~2.
The paths consisting of degree-2 vertices not in the subset
are contracted. For each of these subsets, we
test whether this further reduced instance
has a $k$-line drawing using Renegar's decision algorithm.
Since this reduces the number of vertices and, hence, the number of first-order variables
to $O(k^2)$, Theorem~\ref{thm:renegar} ensures a total running time of
$k^{O(k^2)} + O(n+m)$ as required.
\end{proof}

Corollary~\ref{cor:line-cover-fpt} says that we can decide in FPT time whether or not
a graph $G$ has a $k$-line drawing.
However, this does not mean that, if such a drawing exists,
we can construct it efficiently as well.
In fact, there is a graph such that every $\rho^1_2$-optimal
drawing of this graph has a vertex with an irrational coordinate; see
Section~\ref{sec:irrational-coordinates2}. Nevertheless, we are able to
obtain a \emph{combinatorial description} of a $k$-line drawing in FPT time
whenever one exists.
We now define this concept.

First, we need an additional variant of an arrangement graph.
Let~$\mathcal{L}$ be an arrangement of $k$ distinct lines
in~$\reals^d$.  Without loss of generality, we can assume that there
is no isolated line, that is, every line in~$\mathcal{L}$ is crossed
by at least one other line in~$\mathcal{L}$.
Note that we do \emph{not} assume that every point is contained in at
most two lines.  The \emph{augmented arrangement graph}~$A_\mathcal{L}$
has crossing points of lines in~$\mathcal{L}$
as vertices, and two such points are adjacent in $A_\mathcal{L}$ if they are neighboring on a
line in $\mathcal{L}$. Moreover, the two \emph{tails}, that is, the rays not containing crossing points,
of each line in $\mathcal{L}$ are represented in $A_\mathcal{L}$ by vertices of degree~$1$
adjacent to the crossing points from which the tails emanate.
The graph $A_\mathcal{L}$ is endowed in a natural way with a \emph{path factorization}
by which we mean a partition of the edge set of $A_\mathcal{L}$ into~$k$ paths such that
every two paths have at most one common vertex.

Let $\Gamma$ be a drawing of a graph $G$ on $\mathcal{L}$.
If we see $A_\mathcal{L}$ in a natural way as a topological graph, then
any vertex of $G$ in $\Gamma$ either is placed at a vertex of $A_\mathcal{L}$
or subdivides an edge of $A_\mathcal{L}$. This motivates the following definition.
A \emph{combinatorial description of a $k$-line drawing} of $G$ consists of
\begin{itemize}
\item
a graph $A$ that is the augmented arrangement graph of an arrangement $\mathcal{L}$ of $k$ lines,
\item
the path factorization of $A$ determined by $\mathcal{L}$,
\item
a subdivision $A'$ of $A$,
\item
an isomorphism from $G$ to a subgraph of $A'$.
\end{itemize}

\begin{theorem}\label{thm:fpt-main^2}
  For~$d \in \{2,3\}$, given any graph~$G$ with~$n$ vertices and~$m\ge1$
  edges, and any positive integer $k$,
  the decision problem~\enquote{$\rho^1_d(G)\le k$?} can be solved
  in~$k^{O(k^2)} + O(n+m)$ time.
  If the answer is yes,
  a combinatorial description of a
  $k$-line drawing of $G$ in~$\reals^d$
  can be found within the same time bound.
\end{theorem}

\begin{proof}
  We apply Corollary~\ref{cor:line-cover-fpt}.  If the answer to the
  decision algorithm is no, we are done.  Otherwise, we can assume
  that~$G$ satisfies some of the necessary properties observed in the
  proof of Theorem~\ref{thm:line-cover-kernel}.  For example,
  as in the proof of Theorem~\ref{thm:line-cover-kernel}, we can
  assume that the input graph~$G$ has no path components.  (If there
  are path components, we can place them on the ``far end'' of some
  line.)  For $i<n$,
  we denote by~$V_i$ the set of vertices of a graph $G$ having degree
  $i$ and by $V_{\ge i}$ the set of vertices of $G$ having degree at
  least~$i$.  Again, following the proof of
  Theorem~\ref{thm:line-cover-kernel}, a necessary condition for
  $\rho^1_d(G)\le k$ is
\begin{equation}
  \label{eq:V13}
  |V_{\ge3}|\le{k\choose 2}.
\end{equation}
We also assume this condition here.  Due to the fact that we have no
path components, $|V_1|$ is bounded by the total degree
of the vertices in~$V_{\ge 3}$, which is bounded by~$2k^2$ as we
showed in the proof of Theorem~\ref{thm:line-cover-kernel}.
We call a path in the input graph~$G$ \emph{straight} if its end
vertices do not have degree~$2$ whereas all internal vertices have
degree~$2$.  Similarly, we call a cycle in~$G$ \emph{straight} if all
but one of its vertices have degree~$2$. (A cycle in~$G$ where all
vertices have degree~$2$ is necessarily a component of~$G$.)
Note that $G$ contains fewer than $2k^2$ straight paths and cycles.
(This is due to the fact that each of them ``consumes'' at least one
unit from the sum of the vertex degrees
of the augmented arrangement
graph~$A_\mathcal{L}$ of a $k$-line arrangement~$\mathcal{L}$, which
has at most $k^2$ edges.)

Obviously, all vertices in~$V_{\ge 3}$ are located at crossing
points of a $k$-line arrangement~$\mathcal{L}$.  Let
$S \subseteq V_2$ be the set of degree-2 vertices located at
crossing points of~$\mathcal{L}$.  Clearly,
\begin{equation}
  \label{eq:admissible}
|S|\le{k\choose 2}.
\end{equation}
Note also that $S$ must contain at least two vertices of each straight cycle.
Any such set $S\subseteq V_2$ satisfying also condition~(\ref{eq:admissible})
will be referred to as \emph{admissible}.
Let $G_S$ denote the graph obtained from $G$ by smoothing out all vertices in $V_2 \setminus S$.
More precisely, note that each connected component in
the subgraph induced by $V_2 \setminus S$ is a path incident to one or
two unique vertices in $V_1\cup S\cup V_{\ge 3}$.
We obtain~$G_S$ from~$G$ by replacing each of the paths incident to
two different vertices in $V_1\cup S\cup V_{\ge 3}$ by an edge that
connects the two vertices.

Consider a \emph{slightly subdivided} version of the augmented
arrangement graph~$A_\mathcal{L}$.
Specifically, call an edge \emph{pendant} if it is incident to a vertex of degree~$1$.
For any subset~$E$ of the at most $k(k-1)$ non-pendant edges
of~$A_\mathcal{L}$, we use~$B_{\mathcal{L},E}$ to denote the graph resulting
from~$A_\mathcal{L}$ if we subdivide each edge in~$E$ by two extra
vertices.
The new vertices of $B_{\mathcal{L},E}$ correspond to possible locations
of vertices from $V_1$ between two crossing points on a line in $\mathcal{L}$, whereas edges that have not been subdivided correspond to possible edges between vertices in~$S \cup V_{\ge 3}$.

Note that $G$ is drawable on $\mathcal{L}$ if and only if there exists
an admissible set $S$ and a subset~$E$ of non-pendant edges of~$A_\mathcal{L}$
such that $G_S$ is isomorphic to a subgraph of~$B_{\mathcal{L},E}$.
Our strategy consists, therefore, of the
following steps:
\begin{itemize}
\item for every admissible subset~$S$ of~$V_2$, generate the
  subgraph~$G_S$ of~$G$;
\item generate every possible, up to isomorphism, graph
  $B_{\mathcal{L},E}$ along with the path factorization determined by
  $\mathcal{L}$;
\item for every pair $(G_S,B_{\mathcal{L},E})$, check whether
  $B_{\mathcal{L},E}$ contains a subgraph isomorphic to~$G_S$ and, if
  so, compute an isomorphism $\beta$ from $G_S$ to such a subgraph;
\item
  subdivide $G_S$ back to restore $G$ and extend $\beta$ to an
  isomorphism from $G$ to a subdivided version of $B_{\mathcal{L},E}$;
\end{itemize}
As we show below, we will eventually find a suitable pair of~$G_S$ and
$B_{\mathcal{L},E}$ using our assumption~$\rho^1_d(G) \le k$.

\subparagraph*{Generating $G_S$ for each admissible set~$S$.}
Let $S$ and $S'$ be two subsets of~$V_2$.  We call them
\emph{equivalent} if each straight path and each straight cycle has
equally many vertices in~$S$ and in~$S'$.  Note that, if $S$ and $S'$
are equivalent, then $G_S$ and $G_{S'}$ are isomorphic.

In order to generate representatives of each equivalence class,
we first rename the vertices of~$G$. For the vertices in $V_1$ and $V_{\ge3}$
we use labels of binary length $O(\log k)$; this is possible since
both sets have cardinality $O(k^2)$ (see inequality~(\ref{eq:V13}) and
the argument following it).
The label of a vertex~$v$ of degree~$2$ consists of the labels assigned to the end vertices
of the straight path containing $v$ and the index of $v$ along this
path counted in the direction starting from the end vertex with
lexicographically smaller label. We do similarly for the vertices along a straight cycle,
fixing an orientation of each straight cycle for this purpose.
Let $K={k\choose 2}$.
Now, we generate every subset~$S$ of~$V_2$ with $|S|\le K$
such that the intersection of $S$ with each straight path and cycle is an
\emph{initial} segment of this path or cycle.  This is no restriction
since only the \emph{number} of $S$-vertices in each straight path or
cycle matters.  The initial segment may be empty for
some paths and must contain at least two vertices for every cycle.
Let $\sigma$ denote the number of such subsets of~$V_2$.
Then $\sigma \le {K+2k^2\choose K}$ since the right side bounds
the number of ways of putting the $K$ elements of~$V_2$ into~$2k^2+1$
containers corresponding to the straight paths and
cycles.   (We put the elements of~$V_2$
that are not used by~$S$ into an extra ``dummy''
container.)  Using the bound ${a\choose b}<(e\,a/b)^b$, we get
$\sigma < \big((5+4/(k-1))e\big)^K$.  Assuming $k \ge 2$ yields
$\sigma < (9e)^K <5^{k^2}$ since $\sqrt{9e}<5$.

Let $\mathcal F$ be the set of the graphs~$G_S$ for
every admissible subset~$S$ of~$V_2$.
Clearly, $|\mathcal F|=\sigma<5^{k^2}$.
Moreover, every graph $F$ in $\mathcal F$ has at most
$|V_1|+|S|+|V_{\ge3}|\le 3k^2$ vertices,
and each vertex is represented by a binary label of length $O(\log k)$.
As we observed above, each element $G_S$ of~$\mathcal{F}$ corresponds
to a partition of~$S$ into at most $2k^2+1$ containers where each
container (except the dummy container) gets an initial segment of the
corresponding straight path or cycle.  We can hence enumerate the partitions
of~$S$ with a delay of $O(k^2)$: we index the containers with numbers
$1,2,\dots,2k^2+1$, and then we go through the partitions of~$S$ in
lexicographic order, always making sure that each cycle container gets
at least two elements.
In a preprocessing step, we compute from~$G$ (in $O(n+m)$ time) a
``frame'' so that we can generate~$G_S$ for each admissible set~$S$ in
$O(k^2)$ time: we simply put the vertices in each
container (except for the dummy container) at the
appropriate spot of the frame, that is, at the beginning of the
corresponding straight path or cycle.
Thus, after a preprocessing step taking~$O(k^2+n+m)$ time, we can
enumerate~$\mathcal{F}$ in $5^{O(k^2)}$ time.

\subparagraph*{Generating $B_{\mathcal{L},E}$.}
In general, a graph $H$ with a specified path factorization
is called \emph{factorized}; the corresponding paths will be referred to
as \emph{path factors} of $H$.
We call a factorized graph $H'$ a \emph{pre-template graph} if
\begin{itemize}
\item
$H'$ has no isolated edge and no vertex of degree~$0$ or~$2$; and
\item
every path factor of $H'$ is a path between two degree-$1$ vertices.
\end{itemize}
A factorized graph $H$ is a \emph{template graph} if there is a
pre-template graph $H'$ and a subset $E$ of the non-pendant edges
of~$H'$ such that $H$ is obtained from $H'$ by subdividing each edge
in~$E$ by two new vertices.  Each path factor of $H'$ is ``prolonged''
to a path factor of $H$ correspondingly.
A factorized graph $H$ is \emph{stretchable in $\reals^d$} if
there is a drawing of $H$ in $\reals^d$ such that every path factor of $H$ lies
on a line and different path factors occupy different lines.

Note that a factorized graph~$H$ with $k$ path factors is a template
graph that is stretchable in~$\reals^d$ if and only if there is a
family~$\mathcal{L}$ of $k$ non-isolated lines in $\reals^d$ and a subset~$E$ of
non-pendant edges of~$A_\mathcal{L}$ such that~$H$ is the
slightly subdivided augmented arrangement graph~$B_{\mathcal{L},E}$.
In order to generate all such graphs, first consider the family~$\mathcal{H}_k$ of all
pre-template graphs with $k$ path factors. Label the path factors by
$1,\ldots,k$.  Given $H'\in\mathcal{H}_k$, label each vertex of~$H'$
with the set of labels of all path factors
to which this vertex belongs.
For a path factor~$i$, let $S_i$ be the sequence of labels of all vertices appearing along that path factor.
The list of the sequences $S_1,\ldots,S_k$ determines $H'$
and allows to reconstruct it up to isomorphism.

Now we argue that $\mathcal{H}_k$ contains at most
$(2^kk!)^k \in k^{O(k^2)}$ factorized graphs.  To this end,
for each sequence $S_i$, we choose a permutation of the
other line labels $1,\dots,i-1,i+1,\dots,k$.  Then, between any two
consecutive line labels in the chosen permutation, we decide whether
to start a new vertex label (which is a set of line labels) or not.
The last positive of these $k$ decisions separates the path factors
that intersect path factor~$i$ from those that do not (that is, path
factor~$i$ intersects between~$1$ and~$k-1$ other path factors).
Moreover, each factorized graph in~$\mathcal{H}_k$ has at most
$2k+{k\choose2}$ vertices and $k^2$ edges.
Hence, $\mathcal{H}_k$ can be generated in time $k^{O(k^2)}$.
A template graph $H$ is stretchable if and only if it is obtained from
a stretchable pre-template graph $H'$.
By Corollary~\ref{cor:PhiGk},
the statement saying that a given factorized
graph~$H'\in\mathcal{H}_k$ is stretchable
can be written in~$O(k^4)$ time
as a sentence $\Phi_{H'}$ in the existential
first-order theory of the reals.  Moreover, $\Phi_{H'}$ uses
$O(k^2)$ first-order variables and involves $O(k^4)$ polynomial inequalities
of bounded degree with bounded coefficients. %
Hence the algorithm of Renegar takes time $k^{O(k^2)}$ to check
whether $\Phi_{H'}$ is valid; see Theorem~\ref{thm:renegar}.
Thus, all stretchable pre-template graphs with $k$ path factors
can be generated in time $k^{O(k^2)}\cdot k^{O(k^2)} \in k^{O(k^2)}$.
From each pre-template graph, we generate up to~$2^{k^2}$ template graphs,
one for each possible subset of non-pendant edges.
Consequently,
all stretchable template graphs with $k$ path factors
and, hence, all slightly subdivided augmented arrangement graphs of
$k$-line families in $\reals^d$
can be generated in time $2^{k^2}\cdot k^{O(k^2)} \in k^{O(k^2)}$.

\subparagraph*{Isomorphism.}
Let~$S$ be any admissible set for~$G$, let~$\mathcal{L}$ be any family of~$k$
non-isolated lines in $\reals^d$, and let~$E$ be any subset
of non-pendant edges of~$A_\mathcal{L}$.
Recall that $|V(G_S)|<3k^2$ and note that also $|V(B_{\mathcal{L},E})|<3k^2$.
Thus, by trial of all injections from $V(G_S)$ to
$V(B_{\mathcal{L},E})$, we can check in time $k^{O(k^2)}$ whether
there exists an isomorphism~$\beta$ from~$G_S$ to a subgraph of
$B_{\mathcal{L},E}$.
If it exists,
we can extend~$\beta$ to an isomorphism from~$G$ to a
subgraph of a subdivision of~$B_{\mathcal{L},E}$ using $O(n+m)$ additional time.
Indeed, we just have to restore the vertices in $V_2\setminus S$
that we removed from $G$, along with the incident edges, and
subdivide~$B_{\mathcal{L},E}$ correspondingly.
In other words, for each straight path or cycle,
we have to restore the final segment that was not included in~$F$.
Let~$P$ be a straight path, let~$w$ be the lexicographically
larger end vertex of~$P$, and,
among the vertices of~$P$ that were included in~$F$,
let~$u$ be the neighbor of~$w$ in~$F$.
All what we have to do is to
subdivide $uw$ by the vertices that were removed from~$P$.
We treat the straight cycles similarly.
Lastly, we restore the original vertex names of~$G$ from the
respective vertex labels.

\subparagraph*{Overall time estimation.}
As we have seen, the set~$\mathcal{F}$ (containing~$G_S$ for every
admissible set~$S$) can be enumerated in time~$5^{O(k^2)}$ time after
a preprocessing step requiring~$O(k^2+n+m)$ time.  We have also seen
that the set of all slightly subdivided augmented arrangement graphs
of all families of~$k$ non-isolated lines in $\reals^d$ can
be enumerated in~$k^{O(k^2)}$ time.  Thus, in
time~$k^{O(k^2)}+O(n+m)$, we can generate
the set
that, for every admissible set~$S$, every family~$\mathcal{L}$ of $k$
non-isolated lines in $\reals^d$, and every subset~$E$ of
non-pendant edges of~$A_\mathcal{L}$, contains the
pair~$(G_S,B_{\mathcal{L},E})$.
As discussed above, in time~$k^{O(k^2)}$ we can check whether there is an isomorphism from~$G_S$ to a subgraph of $B_{\mathcal{L},E}$, and, if it exists, we can further extend it in time~$O(n+m)$ to an isomorphism from~$G$ to a subdivision of that subgraph.
This results in an overall running time of $k^{O(k^2)} + O(n+m)$.
\end{proof}

\section{Rational (Non-) Realizability of
  $\rho^1_2$-Optimal Drawings}
\label{sec:irrational-coordinates2}

In this section, we show that there are graphs all of whose
$\rho^1_2$-optimal drawings require a vertex to be mapped to a point
with an irrational coordinate; therefore such a drawing does not fit
on any integer grid.

A \emph{collinearity configuration} is a set $V$ of \emph{abstract points}
along with a family of 3-element subsets of $V$ called \emph{collinear triples}.
A \emph{realization} of the collinearity configuration is an injective mapping~$\alpha\function V{\reals^2}$
such that any three abstract points $a$, $b$, and $c$ form a collinear triple if
and only if the points $\alpha(a)$, $\alpha(b)$, and $\alpha(c)$ are collinear.
The \emph{Perles configuration} is a collinearity configuration of 9 points
whose realization is shown in Fig.~\ref{fig:perles}.
It is known that every realization of the Perles configuration
contains a point with an irrational coordinate~\cite[page 23]{Berger2010}.

    \begin{figure}[htb]
        \begin{subfigure}{.48\textwidth}
            \centering
            \includegraphics[page=2]{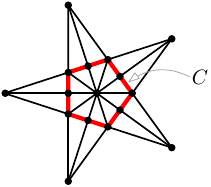}
            \caption{The Perles configuration has at least one vertex with
                an irrational coordinate.}
            \label{fig:perles}
        \end{subfigure}
        \hfill
        \begin{subfigure}{.48\textwidth}
            \centering
            \includegraphics[page=1]{perles-configuration}
            \caption{The graph~$G$ contains the Perles configuration in a
                $\rho^1_2$-optimal drawing.}
            \label{fig:g-unb}
        \end{subfigure}
        \caption{The Perles configuration and a supergraph of it.}
        \label{fig:perles+co}
    \end{figure}

\begin{theorem}
    \label{thm:irrational-coord-graph}
    There exists a graph~$G$ such that every drawing~$\Gamma$ realizing~$\rho^1_2(G)$
    contains a vertex with an irrational coordinate.
\end{theorem}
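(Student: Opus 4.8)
The plan is to build a graph $G$ whose nine \emph{configuration vertices} are forced, in every $\rho^1_2$-optimal drawing, to realize the Perles configuration; since every realization of that configuration uses an irrational coordinate, the theorem follows at once. Concretely, $G$ is the graph of Fig.~\ref{fig:g-unb}: it has one vertex for each of the nine abstract points, and for each collinear triple of the configuration its three vertices are joined by a two-edge path, so that each line of the configuration becomes a maximal straight segment. To control the combinatorics of an optimal line cover I would, as in the proof of Theorem~\ref{thm:rho12-hard}, attach degree-1 tails to the configuration vertices so that every line of an optimal cover is pinned to carry a prescribed set of these vertices together with exactly two tails, and hence to consist of a single segment.

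First I would establish the upper bound $\rho^1_2(G)\le k$, where $k$ is the number of lines of the Perles configuration. Taking the real realization shown in Fig.~\ref{fig:perles}, I place each configuration vertex at its point, draw each two-edge path along the configuration line containing its triple, and route the tails along those same lines without creating new ones. Every edge then lies on one of the $k$ lines, so this is a valid $k$-line cover.

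The heart of the proof is the matching lower bound together with the rigidity claim that \emph{every} $k$-line cover $(\delta,\mathcal L)$ of $G$ makes $\delta$ restricted to the nine configuration vertices a realization of the Perles configuration. Running the degree-and-tail counting of Theorem~\ref{thm:rho12-hard} on $G$, I would argue that $|\mathcal L|\ge k$, so that an optimal cover uses exactly $k$ lines, that each line of $\mathcal L$ is a single segment, and that each configuration vertex lies exactly at the crossing dictated by its incidences, whence all the prescribed collinear triples are collinear in $\delta$. Unlike the simple arrangements of Theorem~\ref{thm:rho12-hard}, several cover lines may now be concurrent, since a configuration point may lie on several configuration lines; the tails are therefore used only to fix the number of lines and to keep each line a single segment, not to forbid concurrences. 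The delicate part is the \emph{non-degeneracy} demanded by the definition of a realization (the ``only if'' direction): I must rule out that a $k$-line cover identifies two configuration lines, places two configuration vertices at one point, or makes some forbidden triple collinear. This is exactly where the tails earn their keep, since any such degeneracy would either destroy a crossing that some configuration vertex requires or force a third degree-1 vertex onto some line, contradicting the ``exactly two tails per line'' bookkeeping.

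I expect this rigidity step to be the main obstacle, because it is here that the purely combinatorial constraints of a line cover must be leveraged to enforce the \emph{non}-collinearities that are responsible for the irrationality of the Perles configuration; the tail gadget has to be tuned so that $k$ lines suffice if and only if no forbidden incidence is created. Once rigidity is in place the conclusion is immediate: in any $\rho^1_2$-optimal drawing $\Gamma$ the nine configuration vertices form a realization of the Perles configuration, which by the cited fact must contain a point with an irrational coordinate, so $\Gamma$ itself has a vertex with an irrational coordinate.
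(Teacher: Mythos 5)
Your high-level strategy coincides with the paper's: exhibit a graph, show that every $\rho^1_2$-optimal drawing is forced to realize the Perles configuration, and invoke the known irrationality of that configuration. But the entire mathematical content of the theorem lives in the rigidity step, and your proposal does not prove it --- you explicitly defer it (``I expect this rigidity step to be the main obstacle'', ``the tail gadget has to be tuned so that\dots''). Identifying the hard step is not the same as carrying it out, so as written this is a proof sketch with its central lemma missing.

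Moreover, the mechanism you propose for rigidity is doubtful. The tail-counting argument of Theorem~\ref{thm:rho12-hard} works because a \emph{simple} arrangement has $\binom{k}{2}$ degree-4 vertices, each forcing a distinct crossing of exactly two lines; in the Perles configuration the lines are concurrent at configuration points, and your graph (one vertex per point, a two-edge path per collinear triple, plus tails) has only low-degree vertices, so counting tails and segments can at best fix the number of lines and force each line to be one segment. It gives you no handle on the \emph{non}-collinearities: a degenerate cover that places a forbidden triple on a common line, or merges two configuration lines, need not create a third degree-1 vertex on any line nor destroy any crossing that a configuration vertex requires, so your ``bookkeeping'' contradiction does not materialize. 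The paper's construction is genuinely different: its graph has no tails at all, but instead a \emph{big} vertex $b$ of degree $10$ joined to ten \emph{medium} vertices forming a $10$-cycle $C$, plus five \emph{small} vertices of degree $3$; rigidity is then extracted by a long case analysis --- the medium vertices must wind around $b$ in cyclic order, the small vertices must lie outside the region bounded by $C$, the lines split into the classes $\mathcal{L}_b$, $\mathcal{L}_m$, $\mathcal{L}_s$ with $r_b\ge 5$ and $3\le r_m\le 5$, the cases $r_m=3$ and $r_m=4$ are eliminated, and the surviving case forces a convex pentagon whose side-extensions meet exactly at the small vertices, reproducing the collinearities of Fig.~\ref{fig:g-unb}. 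You would need an argument of comparable depth (or a new idea) to close the gap; the tails alone will not do it.
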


\begin{proof}
Consider the graph~$G$ whose drawing is shown in Fig.~\ref{fig:g-unb}.
Let $r=\rho^1_2(G)$ and note that $r\le 10$ as the drawing of $G$ in Fig.~\ref{fig:g-unb} occupies $10$ lines.

Let~$\Gamma$ be a drawing of~$G$ realizing~$r$.
By the detailed analysis below we will show that~$\Gamma$ has the same
    configuration as the drawing in Fig.~\ref{fig:g-unb}, that is, any three vertices
    are collinear in~$\Gamma$ if and only if they are collinear in
    Fig.~\ref{fig:g-unb}.  Since Fig.~\ref{fig:g-unb} contains a
    realization of the Perles configuration, $\Gamma$
    must contain at least one vertex with an irrational coordinate.

Classify the vertices of $G$ into three classes according to their degrees: one \emph{big vertex} $b$ of degree~$10$,
ten \emph{medium vertices} of degree~$5$ or~$4$, and five \emph{small vertices} of degree~$3$.
Let $m_1,\dots, m_{10}=m_0$ be the medium vertices enumerated as they appear along the pentagon
in Fig.~\ref{fig:g-unb}. They form a cycle of length 10, which we denote by~$C$.

For convenience we shall sometimes identify the vertices or edges of
the graph $G$ with their images in the drawing~$\Gamma$.

For every two
consecutive medium vertices~$m_i$ and~$m_{i+1}$, we claim that there is
no medium vertex~$m_j$ such that the open segment
$(m_i, m_{i+1})$ intersects the ray~$bm_j$.
Indeed, assume the contrary.
Since the points $b$, $m_i$, and $m_{i+1}$ are the vertices of a triangular face,
they cannot be collinear, so the ray $bm_j$ intersects $(m_i, m_{i+1})$ at a single point,
which we denote by $m_j'$. Since the drawing $\Gamma$ is crossing free, the point $m_j$
is closer to the point $b$ than the point $m_j'$. Therefore,
the triangle $\triangle bm_im_{i+1}$
contains the point $m_j$ in its interior.
Since $\Gamma$ is crossing-free, all the remaining vertices of $C$ also lie in the interior
of $\triangle bm_im_{i+1}$.
It follows that the ten lines $bm_1,bm_2,\dots,bm_{10}$ are pairwise
different, and each of them is different from the line~$m_im_{i+1}$.
So we need at least $11$ lines to cover~$\Gamma$; a contradiction.

Thus, the medium vertices appear in $\Gamma$ around the point $b$ in
their innate cyclic order along~$C$.
In particular, the point $b$ belongs to the part $B$ of the plane bounded by $C$, and
the segments $[b,m_i]$, $1\le i\le10$, partition $B$ into ten triangles
$\triangle bm_0m_1, \triangle bm_1m_2, \dots, \triangle bm_9m_{10}$.
It follows that, for each line $\ell$ passing through the
point $b$, the intersection $\ell\cap B$ is a segment.

\begin{claim}\label{cl:small-B}
All small vertices lie outside~$B$.
\end{claim}

\begin{subproof}
  Assume that a small vertex $s$ is drawn in~$B$.  Then $s$ is
  contained inside of a triangle $\triangle bm_im_{i+1}$ for some~$i$.
  However, in this case $s$ can be adjacent only to $m_i$ and
  $m_{i+1}$, and to no third medium vertex; a contradiction.
\end{subproof}

A cover $\mathcal{L}$ of $\Gamma$ by $r$ lines consists of the following three families.
\begin{itemize}
\item $\mathcal{L}_b$, consisting of the lines in $\mathcal{L}$
covering an edge $bm_i$ for some $1\le i\le10$.  Set $r_b=|\mathcal{L}_b|$
and note that $r_b\ge 5$.
\item $\mathcal{L}_m$, consisting of the lines in $\mathcal{L}$ covering the cycle $C$.
It is easy to see that $\mathcal{L}_m\subseteq\mathcal{L}\setminus\mathcal{L}_b$
and, hence, $r_m=|\mathcal{L}_m|\le r-r_b\le 5$.
\item $\mathcal{L}_s=\mathcal{L}\setminus(\mathcal{L}_b\cup \mathcal{L}_m)$.
  We set $r_s=|\mathcal{L}_s|$ and call the lines in
$\mathcal{L}_s$ \emph{special}.  An edge of~$G$ that is not covered by
any line in $\mathcal{L}_b\cup \mathcal{L}_m$ will be called \emph{special}, too.
Thus, a special edge has to be covered by a special line.
Note also that every special edge connects
a small and a medium vertex.
\end{itemize}

\begin{claim}\label{cl:L_b}
If~$s$ is a small vertex, then $\mathcal{L}_b$ covers at most one edge incident to~$s$. %
\end{claim}

\begin{subproof}
Assume that a line $\ell\in\mathcal{L}_b$ covers an edge incident to $s$.
Since $\ell$ passes through the points $s$ and $b$, $\ell$ is unique.
Recall that the intersection $\ell\cap B$ is a segment.
By Claim~\ref{cl:small-B}, the point $s$ lies outside $B$.
Since  $s$ is adjacent only to medium vertices, all lying on $C\subset B$, the line
$\ell$ covers no other edge incident to~$s$.
\end{subproof}

The cycle $C$ is drawn as a closed polyline $\hat C$
with $3\le r_C\le 10$ corners.
Therefore, $3\le r_m\le 5$ and so $r_s\le r-r_b-r_m\le 2$.

\begin{claim}\label{cl:mostone}
If $r_m\le 4$, then every line in $\mathcal{L}_m$ covers exactly one side of $\hat C$, and $r_C=r_m$.
\end{claim}
\begin{subproof}
If a line $\ell\in  \mathcal{L}_m$ covers two sides of $\hat C$, then each of the four endpoints of these sides need a separate line from $\mathcal{L}_m$ to cover its incident side not covered by $\ell$.
\end{subproof}

\begin{claim}\label{cl:L_m}
Suppose that $r_m\le 4$. If $s$ is a small vertex, then a line
$\ell\in\mathcal{L}_m$ covers at most one edge incident to~$s$.
\end{claim}

\begin{subproof}
By Claim~\ref{cl:mostone}, $r_C=r_m\le 4$.
This bound implies that
the intersection $\ell\cap B$ is a segment.
If $\ell$ passes through $s$, then $\ell$ can contain
at most one of the three medium vertices adjacent to $s$, because all of them lie on $\hat C$
whereas $s$ lies outside $B$ by Claim~\ref{cl:small-B}.
\end{subproof}

\begin{claim}\label{cl:r_m3}
$r_m>3$.
\end{claim}

\begin{subproof}
Suppose for a contradiction that $r_m=3$. Then Claim~\ref{cl:mostone} implies $r_C=3$,
and the family $\mathcal{L}_m$ consists of the three lines
containing the sides of $\hat C$. If a small vertex $s$ belongs to two lines from the
family $\mathcal{L}_m$, then $s$ coincides with one of the
corners of $\hat C$, which is impossible. Therefore, $s$ is covered by at most one
line from $\mathcal{L}_m$. By Claim~\ref{cl:L_m}, this line can cover at most one edge incident to $s$.
Along with Claim~\ref{cl:L_b} this implies that the family $\mathcal{L}_b\cup \mathcal{L}_m$
covers at most two edges incident to $s$. Thus, for each small vertex $s$, there exists
at least one edge incident to $s$ that has to be covered by a special line; let $e(s)$ be one such edge.
Since we have $r_s\le 2$ special lines covering five special edges $e(s)$,
there exists a special line $\ell$ containing at least three of these edges,
say $e(s_1)$, $e(s_2)$, and $e(s_3)$.
The three points $s_1$, $s_2$, and $s_3$ split the line~$\ell$ into
four parts~$\ell^1, \ell^2, \ell^3, \ell^4$.
Since the triangle $B$ is a convex set that, by Claim~\ref{cl:small-B}, contains no small vertex,
it can intersect only
one of these parts, say $\ell^j$. The part $\ell^j$ has one or two
endpoints, say, $s_1$, or~$s_1$ and~$s_2$.
Then the edge $e(s_3)$, connecting $s_3$ with $C$ along $\ell$, must contain one of the
vertices $s_1$ and $s_2$,
a contradiction.
\end{subproof}

\begin{claim}\label{cl:r_m4}
$r_m>4$.
\end{claim}

\begin{subproof}
Suppose for a contradiction that $r_m=4$. Then by Claim~\ref{cl:mostone} $r_C=4$
 and the family $\mathcal{L}_m$ consists of the four lines
containing the four sides of quadrilateral~$\hat C$.

First, assume that $\hat C$ is nonconvex.
Note that
$B$ contains all crossing points of the lines from~$\mathcal{L}_m$.
So each small vertex
$s$ can be covered by at most one line from $\mathcal{L}_m$.
Recall that, by Claim~\ref{cl:L_m}, such a line can cover at most one edge incident to $s$.
Together with Claim~\ref{cl:L_b} this implies that the family~$\mathcal{L}_b\cup \mathcal{L}_m$
covers at most two edges incident to $s$. Thus, for each small vertex $s$, there exists an
edge~$e(s)$ incident to $s$ that has to be covered by a special line.
Since $r_s\le r-r_b-r_m\le 1$, there is a single special line $\ell$, which covers all five special edges $e(s)$.
The five small points split~$\ell$ into six parts.
The quadrilateral $B$ consists of two triangles, that contain no small vertices by Claim~\ref{cl:small-B}.
It follows that $B$ can intersect at most two of the six parts of~$\ell$. These two parts
have at most four endpoints, and hence there is a small vertex $s'$ different from them.
Since the edge $e(s')$ contains no small vertex except $s'$ and, in particular, no endpoint of
the two intersected parts, it cannot reach a medium vertex on $\hat C$ along $\ell$; a contradiction.

Now assume that $\hat C$ is convex.
The four lines from the family $\mathcal{L}_m$ (which are the straight-line extensions of the
sides of $B$) have at most ${4 \choose 2}=6$ crossing points.
Four of them are vertices of~$B$.  Therefore, at most two small
vertices can be crossing points of two
lines from $\mathcal{L}_m$. Let $s_1$, $s_2$, and $s_3$ be the three remaining small vertices.
Thus, each $s_i$ can be covered by at most one line from $\mathcal{L}_m$,
which, by Claim~\ref{cl:L_m}, covers at most one edge incident to $s_i$.
Applying Claim~\ref{cl:L_b}, we have that the family $\mathcal{L}_b\cup \mathcal{L}_m$
covers at most two edges incident to $s_i$. Thus, for each $i=1,2,3$, there exists an
edge $e(s_i)$ incident to $s_i$ that has to be covered by a special line.
Since there is a single special line $\ell$, it covers all three special edges $e(s_i)$.
The three points $s_1$, $s_2$, and $s_3$ split $\ell$ into four
parts~$\ell^1, \ell^2, \ell^3, \ell^4$.
Since $B$ is a convex quadrilateral that contains no small vertex (by Claim~\ref{cl:small-B}),
$B$ can intersect only
one of these parts, say $\ell^j$. One of the three small vertices,
say $s_1$,
is different from any endpoint of
$\ell^j$. The edge $e(s_1)$ cannot reach $\hat C$ along $\ell$ without crossing
another small vertex, which is an endpoint of $\ell^j$;
a contradiction.
\end{subproof}

Therefore, $r_m=5$. Thus, $r_s=0$, $\mathcal{L}_s=\emptyset$,
and the drawing $\Gamma$ is covered by $\mathcal{L}_b\cup \mathcal{L}_m$.
Since each small vertex $s$ has degree~$3$, this vertex is the
crossing point of at least
two lines of $\mathcal{L}$ in the drawing $\Gamma$.
We say a vertex $s$ is \emph{proper} if two of these lines are in
$\mathcal{L}_m$ and \emph{improper}, otherwise.

In the latter case, by Claim~\ref{cl:L_b}, $s$ is a crossing point of
exactly two lines of $\mathcal{L}$, namely, of a line
$\ell\in\mathcal{L}_m$ and a line $\ell'\in\mathcal{L}_b$, where
$\ell$ covers two edges incident to $s$ and $\ell'$ covers one such edge.
Let $s$ be adjacent to the medium vertices $m_{i-1}$, $m_i$, and
$m_{i+1}$. Since triangles $\triangle sm_{i-1}m_i$ and $\triangle sm_im_{i+1}$ share the edge $sm_i$, we see that the vertices
$m_{i-1}$, $s$, and $m_{i+1}$ lie, in this order,
on the line $\ell$. Since the point $m_i$ lies on $\ell'$ between $b$ and $s$,
it is an interior point of the triangle $\triangle bm_{i-1}m_{i+1}$.
That is, $s$ is placed outside the reflex angle of the polygon~$\hat C$.
Since this angle is determined by the triple of vertices $m_{i-1}$, $m_i$, and $m_{i+1}$,
every reflex angle of $\hat C$ hosts no more than one small vertex.

We claim that $\hat C$ is a pentagon.
To show this, consider $\mathcal{L}_m$ as a set of vertices of an auxiliary graph such that
lines $\ell$ and $\ell'$ of $\mathcal{L}_m$ are adjacent if and only
if the crossing point of $\ell$ and $\ell'$ is a vertex of~$\hat C$.
Going along the boundary of $\hat C$, and listing the lines of
$\mathcal{L}_m$ covering passed edges, we construct an Eulerian cycle
in $\mathcal{L}_m$.
Since each vertex of the auxiliary graph has degree at least two and at most four,
it is easy to check that the auxiliary graph is Eulerian if and only if the multiset of degrees of vertices
of the graph $\mathcal{L}_m$ is one of the following:
$\{2,2,2,2,2\}$, $\{4,2,2,2,2\}$, $\{4,4,2,2,2\}$, or $\{4,4,4,4,4\}$.
In the first case, $\hat C$ is a pentagon. We shall show that the other cases are impossible.

\begin{itemize}
\item Case $\{4,2,2,2,2\}$. In this case $\hat C$ is a hexagon and
  there is one line in $\mathcal{L}_m$, covering two sides of $\hat
  C$, and four lines, covering one side of $\hat C$. So it is easy to
  study all possible shapes for $\hat C$, see
  Fig.~\ref{fig:Th7_Hexagons}.  It is easy to check that each of the
  shapes have at most two reflex angles (for placing an improper
  vertex) and that in each of the shapes there are at most two
  crossing points of two lines of $\mathcal L_m$ outside $\hat C$, so
  at most two proper vertices can be drawn.  As there are five small
  vertices, we get a contradiction.
\begin{figure}
\centering
\includegraphics{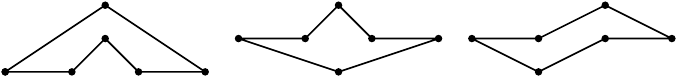}
\caption{Possible shapes for a hexagon  $\hat C$ when  $r_m=5$.}
\label{fig:Th7_Hexagons}
\end{figure}

\item Case $\{4,4,2,2,2\}$. In this case $\hat C$ is a heptagon and there are two (necessarily intersecting) lines in $\mathcal{L}_m$, covering two sides of $\hat C$, and three lines, each covering one side of $\hat C$. So it is easy to study all possible shapes for $\hat C$, see Fig.~\ref{fig:Th7_Heptagons}.
\begin{figure}
\centering
\includegraphics{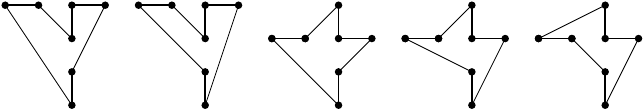}
\caption{Possible shapes for a heptagon  $\hat C$ when  $r_m=5$.}
\label{fig:Th7_Heptagons}
\end{figure}

It is easy to check that each of the shapes has at most three reflex
angles (for placing an improper vertex) and that in each of the shapes
there is at most one crossing point of two lines of~$\mathcal L_m$
outside $\hat C$, so at most one proper vertex can be drawn.  As there
are five small vertices, we get a contradiction.

\item Case $\{4,4,4,4,4\}$. In this case $\hat C$ is a decagon. For
  each line $\ell\in\mathcal L_m$, the endpoints of sides of $\hat C$
  covered by $\ell$ are its crossing points with the other lines of
  $\mathcal{L}_m$. Since there are at most four such points, $\hat C$
  can be a decagon only if each line $\ell\in\mathcal L_m$ intersects
  all other lines of $\mathcal{L}_m$ and all its crossing points are
  endpoints of two sides of $\hat C$.  Since each improper vertex
  belonging to $\ell$ is adjacent to the endpoints of such distinct
  sides, $\ell$ covers at most one improper vertex $s$.  By the
  properties of an improper vertex, $\ell$ covers two edges incident
  to~$s$.  Let~$s$ be adjacent to the medium vertices~$m_{i-1}$,
  $m_i$, and~$m_{i+1}$.  Then the vertices $m_{i-1}$, $s$, and
  $m_{i+1}$ lie, in this order, on a line $\ell$.  Note that the
  angle~$\angle m_i$ at~$m_i$ in the decagon~$\hat C$ is reflex.
Thus, $s$ contributes
\begin{align*}
  &\hspace{-5ex}\angle m_{i-1}/2+\angle m_{i}+\angle m_{i+1}/2= \\
  &= \frac{\pi-\angle m_im_{i-1}m_{i+1}}{2} +
    (2\pi - \angle m_{i-1}m_im_{i+1}) + \frac{\pi-\angle m_im_{i+1}m_{i-1}}{2} \\
  &= 3\pi-
    \frac{\angle m_im_{i-1}m_{i+1}+\angle m_{i-1}m_im_{i+1}+\angle m_im_{i+1}m_{i-1}}{2}
    - \frac{\angle m_{i-1}m_im_{i+1}}{2} \\
  &= \frac{5\pi}{2}-\frac{\angle m_{i-1}m_im_{i+1}}{2} > 2\pi
\end{align*}
to the sum of angles of $\hat C$.  Since each crossing point among the
lines of $\mathcal L_m$ is an endpoint of a side of the decagon~$\hat
C$, there remain no free crossing points to place proper vertices,
so all small vertices are improper. Each of them contributes more than
$2\pi$ to the sum of angles of the decagon, which equals $8\pi$; a contradiction.
\end{itemize}

That is, $\hat C$ is a pentagon.
We claim that $B$ is convex.
Suppose for a contradiction that $B$ is nonconvex.
Then $\hat C$ has one or two reflex angles. Let $A$ be any of them.
Each of the two lines of~$\mathcal L_m$
covering the sides of $A$ has at most one crossing point with other lines
of $\mathcal L_m$ outside $B$.  It follows that at most three crossing
points of lines of~$\mathcal L_m$ are outside $B$.
Therefore at least two small vertices are improper.
Let $s$ be one of them and $s\in\ell\in \mathcal L_m$. Let $m_{i-1}$ and $m_{i+1}$ be
neighbors of~$s$ placed on~$\ell$.  It is easy to see (for instance,
in Fig.~\ref{fig:hatCtotriangle}) that we transform~$\hat{C}$ into
a triangle if we replace the line segments $[m_{i-1},m_i]$ and
$[m_i,m_{i+1}]$ by the segment $[m_{i-1},m_{i+1}]$.  It follows that
$\hat C$ had only one reflex angle, so we could place only one improper 
small vertex; a contradiction.

\begin{figure}
  \centering
  \includegraphics{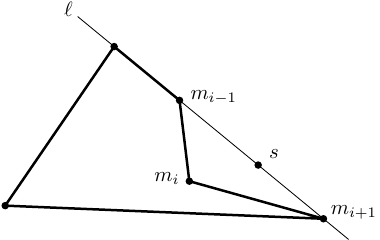}
  \caption{The transformation of the pentagon $\hat C$ into a triangle.}
  \label{fig:hatCtotriangle}
\end{figure}

Thus, $B$ is a convex pentagon. Since an improper small vertex
has to be placed outside a reflex angle of $\hat C$, every small vertex is proper.
Therefore, each such vertex is placed at the crossing point of two lines
from $\mathcal{L}_m$, which extend two non-adjacent sides of $\hat C$.
Since $\hat C$ is a pentagon, it has has exactly five pairs of non-adjacent sides.
Therefore, the extensions of the sides in each pair intersect and, moreover,
the crossing point lies outside~$B$ by Claim~\ref{cl:small-B}.
Since $B$ is a convex pentagon, no three lines from $\mathcal{L}_m$ can have a common point.
Therefore, it is impossible that each edge incident to a small vertex $s$ is covered
by its own line in $\mathcal{L}_m$. Since $B$ is convex, similarly to the proof of
Claim~\ref{cl:L_b} we can show that each line from $\mathcal{L}_m$ covers at most one edge incident to $s$.
It follows that each small vertex $s$ is the crossing point of exactly two lines from $\mathcal{L}_m$
and exactly one line in $\mathcal{L}_b$, which we denote by~$\ell_s$.

The equality $r_m=5$ implies that $r_b=5$ (and, hence, $r=10$).
Since each of the five lines in $\mathcal{L}_b$ can intersect $\hat C$ in at most two points
and there are 10 medium vertices, each $\ell\in\mathcal{L}_b$ intersects $\hat C$ in exactly two points,
representing two medium vertices $m(\ell)$ and $m'(\ell)$.

The convexity of $B$ also implies that this pentagon is contained in the angle created by the
extensions of any pair of its non-adjacent sides.
Therefore, each side $l$ of $B$ is contained in the triangle $T$
created by the line extending~$l$ and the two non-adjacent sides of~$B$.
Thus, for each side $l'$ of $B$ adjacent to $l$, the triangle $T'$ cut
from $T$ by $l'$ shares with $B$ only the side $l'$.
Let $s$ be the vertex of $T'$ opposite to its side $l'$.
Note that $s$ is a small vertex. The line $\ell_s\in\mathcal{L}_b$ crosses the boundary
$\hat C$ of $B$ at the points $m(\ell_s)$ and $m'(\ell_s)$, and one of these points
must be an inner point of $l'$. Thus, every side of $B$ contains a medium vertex as
an inner point. Also, the five corners of~$\hat C$ must be medium vertices.
It readily follows that, for each $\ell\in\mathcal{L}_b$
(that crosses~$b$ by definition
and contains one small vertex each), one of the medium points $m(\ell)$ and $m'(\ell)$
is a vertex of $B$ and the other one is an inner point of the opposite side of~$B$.

We now see that, as claimed, three vertices of $G$ are collinear in~$\Gamma$ if and only if
they are collinear in the drawing in Fig.~\ref{fig:g-unb}.
\end{proof}

\section{Computational Complexity of the Plane Cover Number}
\label{sec:complexity-rho23}

While the graphs with $\rho_3^2$-value~1 are exactly the planar graphs
(which can be recognized in polynomial time), recognizing graphs with
$\rho_3^2$-value $k$, for any $k > 1$, immediately becomes \cNP-hard.
We prove this via the \cNP-hardness of a new problem of \PTSAT
type, which we think is of independent interest.

\begin{definition}
    Let $\Phi$ be a Boolean CNF formula where every clause consists of
    at most three literals.
    The \emph{associated graph} (also known as the
    \emph{variable--clause incidence graph}) of~$\Phi$, $G(\Phi)$, has
    a vertex~$v_x$ for each variable~$x$ in~$\Phi$ and a vertex~$v_c$
    for each clause~$c$ in~$\Phi$.  There is an edge between a
    variable-vertex $v_x$ and a clause-vertex~$v_c$
    if and only if $x$ or $\neg x$ appears in $c$.
    The Boolean formula $\Phi$
    is called \emph{planar} if
    $G(\Phi)$ is planar.  In the problem \PTSAT, the task is to decide
    whether a given planar 3-CNF formula
    has a satisfying truth assignment.
\end{definition}

Kratochv\'il et al.~\cite{kln-nctg-SIAMJDM91} proved the \cNP-hardness of
\PCTSAT, which is a variant of \PTSAT restricted to instances where
$G(\Phi)$ can be augmented by a simple cycle~$C$ that goes exactly
through the clause-vertices such that $G(\Phi)+C$ remains planar.
In their reduction every clause consists of at least two variables.

Mulzer and Rote~\cite{mr-mwtnp-JACM08}
proved the \cNP-hardness of \PPOITSAT, another variant of
\PTSAT where all literals are positive and the assignment must
be such that, in each clause, exactly one of the three variables is
true.  We combine ideas from the two proofs to show the \cNP-hardness
of the following new problem.

\begin{definition}
  In the \PPCOITSAT problem, we are given a collection~$\Phi$ of
  clauses each of which consists of exactly three literals and each
  literal is positive, together with a planar embedding of
  $G(\Phi) + C$ where $C$ is a cycle that goes exactly through all
  clause-vertices.
  The problem is to decide whether $\Phi$ admits a truth assignment
  such that exactly one variable in each clause is true.
\end{definition}

\begin{lemma}\label{lem:PosPlaCycOneNPC}
  \PPCOITSAT is \cNP-complete.
\end{lemma}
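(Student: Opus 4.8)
The plan is to establish NP-completeness of \PPCOITSAT in two parts. Membership in NP is immediate: given a truth assignment, we can verify in polynomial time that exactly one variable is true in each clause. The substance of the lemma is NP-hardness, which I would prove by a reduction from \PCTSAT, exploiting the fact (noted in the excerpt) that we may assume every clause has at least two distinct variables and that the clause-vertices are already connected by a crossing-free cycle $C$ in the planar embedding of the associated graph. The overall strategy mirrors Mulzer and Rote's technique for converting an ordinary \TSAT-type instance into a positive 1-in-3 instance, but carried out so as to \emph{preserve planarity} and the distinguished Hamiltonian-type cycle through the clause-vertices.

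\smallskip

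First I would handle the removal of negative literals. For each variable $x$ that appears negated somewhere, I would introduce a complementary variable $\bar{x}$ together with a small planar gadget forcing $x$ and $\bar{x}$ to take opposite truth values; in the 1-in-3 setting this is the standard trick of a clause-like gadget together with auxiliary "dummy" variables that pad each clause out to exactly three positive literals while enforcing the intended semantics. Concretely, an original clause $(\ell_1 \vee \ell_2 \vee \ell_3)$ of \PCTSAT must be simulated by one or more 1-in-3 clauses over positive literals, so that the clause is satisfiable under an ordinary assignment if and only if the gadget admits a 1-in-3 assignment. The key design constraint is that each gadget must be laid out in the plane inside the face where the original clause-vertex sat, so that the global embedding of $G(\Phi)+C$ remains planar and the new clause-vertices can be spliced into the cycle $C$ locally, extending it to a cycle through \emph{all} clause-vertices of the new instance.

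\smallskip

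The main obstacle, and the step I would spend the most care on, is exactly this planar routing of the gadgets together with the cycle constraint. Making a reduction to 1-in-3-SAT preserve planarity is delicate because the standard 1-in-3 gadgets introduce several new variables and clauses whose associated subgraph must be drawn without crossings and, crucially, must allow the clause-cycle $C$ to be rerouted through all of the newly created clause-vertices. I would argue this by treating each original clause and each variable-occurrence region as a bounded-degree patch of the plane, designing the gadgets to have a fixed planar embedding with a prescribed "entry" and "exit" port on their outer face through which the cycle $C$ passes, and then concatenating these patches along $C$ in the cyclic order inherited from the given embedding of $G(\Phi)+C$. Because each gadget is of constant size and is inserted into a single face, planarity is maintained locally and hence globally, and the cycle through all clause-vertices is obtained by local surgery on $C$ at each gadget.

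\smallskip

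Finally I would verify correctness of the reduction: a satisfying assignment of the original \PCTSAT instance yields, variable by variable and gadget by gadget, a positive assignment making exactly one literal true per clause of the constructed \PPCOITSAT instance, and conversely any 1-in-3 assignment of the constructed instance projects back to a satisfying assignment of the original formula, since the gadgets were built to enforce precisely the clause semantics and the variable-complementation constraints. As the construction is polynomial in the size of the input formula and produces a genuine \PPCOITSAT instance (positive literals, exactly three variables per clause, planar $G(\Phi)+C$ with the required clause-cycle), NP-hardness follows, and together with membership in NP this proves that \PPCOITSAT is NP-complete.
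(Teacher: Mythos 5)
Your overall strategy coincides with the paper's: membership in NP is immediate, and hardness is obtained by starting from the Kratochv\'il et al.\ instance of \PCTSAT (with its planar embedding, its cycle $C$ through the clause-vertices, and the guarantee that every clause has at least two variables) and then locally replacing each clause by positive 1-in-3 gadgets in the style of Mulzer and Rote, rerouting $C$ through the new clause-vertices by local surgery. So the skeleton is the same as in the paper.

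However, you treat the gadget design as routine, and that is precisely where the paper has to do real work; two points are glossed over. First, the cycle can interact with a clause in two combinatorially different ways: its incoming and outgoing edges may lie in the same face around the clause-vertex (the cycle ``touches'' the clause) or in different faces (it ``passes through''). A gadget with a single prescribed entry and exit port on its outer face, as you describe, handles the first case but not obviously the second; the paper must exhibit, for each gadget, a cycle routing for both situations. Second, and more substantively, for clauses with only two literals the Mulzer--Rote construction does \emph{not} admit a cycle through its clause-vertices, so the paper designs a genuinely new gadget: auxiliary variables $a$, $b$, $c$ with inequality gadgets forcing $b\neq x$ and $c\neq y$ and a single 1-in-3 clause on $\{a,b,c\}$, together with a case check that this is 1-in-3 satisfiable if and only if $x\vee y$ holds. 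Your proposal implicitly assumes that off-the-shelf gadgets suffice and can be spliced in generically; without constructing the two-literal gadget and the two cycle routings, the reduction is not complete.
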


\begin{proof}
  It is easy to see that the problem is in \cNP as it is a
  constrained version of the \cNP-complete \PPOITSAT problem.
  To show \cNP-hardness we reduce from \TSAT and use the construction by
  Kratochv\'il et al.~\cite{kln-nctg-SIAMJDM91} to get an equivalent instance
  of \PCTSAT (represented by a formula $\Phi$) together with a cycle
  $C$ through the clause-vertices and a planar embedding $\Gamma$ of the graph
  $G(\Phi) + C$.
  Note that we do not reduce directly from \PCTSAT as the input of this problem
  does not specify a cycle through the clause-vertices---it only promises its
  existence.
  Remember that each clause contains at least two variables in the
  construction of Kratochv\'il at al.

  In the following,
  we iteratively replace the clauses in $\Gamma$ by positive \OITSAT
  clauses while maintaining the cycle through these clauses.
  Hence, we ultimately obtain a \PPCOITSAT instance.
  Our reduction uses some of the gadgets from the proof of Mulzer and
  Rote~\cite{mr-mwtnp-JACM08}.
  We show how to maintain the cycle when inserting these gadgets.
  Some of the gadgets need to be modified to get the cycle in place,
  others can be simplified slightly as we do not insist on a rectilinear
  layout.  Recall that, in \OITSAT, a clause is a triple $(a,b,c)$
  which is satisfied if exactly one of the variables $a$, $b$, or $c$ is true.
  Thus, a clause is not a disjunction of its three literals as in \TSAT.

  We consider the interaction between the cycle and the clauses in $\Gamma$.
  In $\Phi$, every clause consists of two or three literals and thus
  there are two or three faces around a clause in~$\Gamma$.
  There are two options for the cycle: (O1) it can ``touch'' the clause, that
  is, the incoming and the outgoing edge are drawn in the same face; (O2) it can
  ``pass through'' the clause, that is, incoming and outgoing edge are drawn in
  different faces.

  First, we describe the inequality gadget, which was also used by Mulzer and
  Rote.
  It enforces that two variables $x$ and $y$ have different truth values
  (denoted by $x \neq y$).
  They implement this gadget as
  \[ (x,a,y) \land (a,b,c) \land (a,c,d) \land (b,c,d), \]
  where $a$, $b$, $c$, and $d$ are new variables that are only used inside this
  gadget.
  It is easy to see that this expression is satisfiable if exactly one of the
  variables $x$ and $y$ is true~\cite[Lemma 3.4]{mr-mwtnp-JACM08}.
  A planar drawing of the gadget including the cycle through the clauses is
  depicted in Fig.~\ref{fig:inequalityGadget}.

\tikzset{
  variableNodes/.append style={
    every node/.append style={draw,circle,fill=white},
    inner sep=0pt,
    minimum size=13pt
  },
  ineqNodes/.append style={
    draw,rectangle,fill=white,
    inner sep=0pt,
    minimum size=13pt
  },
  gadgetBg/.append style={fill=black!15},
  clauseCycle/.append style={dash pattern=on 2pt off 1pt,thick,->},
  clauseCycleCommon/.append style={clauseCycle,color=red!90!black},
  clauseCycleO1/.append style={clauseCycle,color=green!70!black},
  clauseCycleO2/.append style={clauseCycle,color=blue!80!black},
}
\newcommand{\NPdrawclauseconnections}[1]{
  \foreach \clause/\x/\y/\a/\b/\c in #1 {
    \coordinate (coord\clause) at (\x,\y);
    \draw (\a) -- (coord\clause);
    \draw (\b) -- (coord\clause);
    \draw (\c) -- (coord\clause);
  }
}
\newcommand{\NPdrawclauses}[1]{
  \foreach \clause/\null in #1 {
    \node[draw=black,fill=white] (\clause)
      at (coord\clause) {};
  }
}
\newcommand{\NPdrawineqs}[1]{
  \foreach \a/\b in {#1} {
    \draw (\a) -- (\b);
    \node[ineqNodes] (ineq\a\b)
      at ($ (\a)!0.5!(\b) $) {$\neq$};
  }
}

\begin{figure}[tb]
  \begin{subfigure}[b]{0.56\textwidth}
    \normalsize
    \centering
    \begin{tikzpicture}[]
      \coordinate (xc) at (0,-0.5);
      \coordinate (yc) at (4,-0.5);

      \fill[gadgetBg] (xc)
        -- ++(0,1.3) -- ++(4,0)
        -- (yc)
        -- ++(0,-1.7) -- ++(-4,0)
        -- cycle;

      \begin{scope}[variableNodes]
        \node[fill=white] at (xc) (x) {$x$};
        \node[fill=white] at (yc) (y) {$y$};
        \node at (2,-0.25) (a) {$a$};
        \node at (3,-1.75) (b) {$b$};
        \node at (2,-1) (c) {$c$};
        \node at (1,-1.75) (d) {$d$};
      \end{scope}

      \newcommand{\clauses}{%
        axy/2/0.5/a/x/y,
        acd/1/-1/a/c/d,
        abc/3/-1/a/b/c,
        bcd/2/-1.75/b/c/d}
      \NPdrawclauseconnections{\clauses}
      \NPdrawclauses{\clauses}
      \draw[clauseCycleCommon]
        ($ (acd) + (-1.3,-1.105) $) -- (acd);
      \draw[clauseCycleCommon] (acd) -- (bcd);
      \draw[clauseCycleCommon] (bcd) -- (abc);
      \draw[clauseCycleCommon] (abc) -- (axy);
      \draw[clauseCycleO1]
        (axy) -- ++ (-2.5,-2.125)
        node[anchor=south east,xshift=2.5mm,yshift=1mm] {O1};
      \draw[clauseCycleO2]
        (axy) -- ++(-2.5,0)
        node[anchor=south west,blue,xshift=-2mm] {O2};

      \draw[dotted,thick] (x) -- ++(-1,0)
        (y) -- ++(1,0);
    \end{tikzpicture}
    \caption{Gadget of Mulzer and Rote for $x \neq y$.}
    \label{fig:inequalityGadget}
  \end{subfigure}
  \hfill
  \begin{subfigure}[b]{0.43\textwidth}
    \normalsize
    \centering
    \begin{tikzpicture}[]
      \coordinate (xc) at (-2,0);
      \coordinate (yc) at (2,0);

      \fill[gadgetBg] (xc)
        -- ++(1,1.5) -- ++(2,0)
        -- (yc)
        -- ++(-1,-1.5) -- ++(-2,0)
        -- cycle;

      \begin{scope}[variableNodes]
        \node[fill=white] at (xc) (x) {$x$};
        \node[fill=white] at (yc) (y) {$y$};
        \node at (-1,-0.6) (a) {$a$};
        \node at (0,1) (b) {$b$};
        \node at (0,-1) (c) {$c$};
      \end{scope}

      \newcommand{\clauses}{%
        abc/0/0/a/b/c}
      \NPdrawclauseconnections{\clauses}
      \NPdrawclauses{\clauses}
      \NPdrawineqs{b/x, c/y}

      \draw[clauseCycleCommon]
        ($ (ineqbx)!-0.7!(ineqcy) $) -- (ineqbx);
      \draw[clauseCycleCommon]
        (ineqbx) -- (abc);
      \draw[clauseCycleCommon]
        (abc) -- (ineqcy);
      \draw[clauseCycleO1]
        (ineqcy) -- ++(0.9,2)
        node [anchor=north west] {O1};
      \draw[clauseCycleO2]
        (ineqcy)
        -- ($ (ineqcy)!-0.7!(ineqbx) $)
        node [anchor=south west,xshift=-0.2cm, blue] {O2};

      \draw[dotted,thick] (x) -- ++(-1,0)
        (y) -- ++(1,0);
    \end{tikzpicture}
    \caption{Our gadget for the clause $x \lor y$.}
    \label{fig:2clauseGadget}
  \end{subfigure}

  \medskip

  \begin{subfigure}[b]{\linewidth}
    \normalsize
    \centering
    \begin{tikzpicture}[]
      \coordinate (xc) at (0,6);
      \coordinate (yc) at (-6,0);
      \coordinate (zc) at (6,0);
      \fill[gadgetBg] (xc) -- (yc) -- (zc) -- cycle;

      \begin{scope}[variableNodes]
        \node[fill=white] at (xc) (x) {$x$};
        \node[fill=white] at (yc) (y) {$y$};
        \node[fill=white] at (zc) (z) {$z$};
        \node at (-2.2,3) (a) {$a$};
        \node at (-3.2,1.5) (q) {$q$};
        \node at (-1.2,1.5) (b) {$b$};
        \node at (-0.2,3) (u) {$u$};
        \node at (1.8,3) (e) {$e$};
        \node at (1.8,1) (c) {$c$};
        \node at (3.8,1) (d) {$d$};
        \node at (2.8,2) (r) {$r$};
      \end{scope}

      \newcommand{\clauses}{%
        aux/-1.2/4/a/x/u,
        abq/-2.2/2/a/b/q,
        buy/-0.2/1/b/u/y,
        cdr/2.8/1/c/d/r}
      \NPdrawclauseconnections{\clauses}
      \NPdrawclauses{\clauses}
      \NPdrawineqs{u/e, c/e, d/z}
      \draw[clauseCycleCommon]
        ($ (ineqdz) + (1,1) $) -- (ineqdz);
      \draw[clauseCycleCommon]
        (ineqdz) -- ($ (d) + (-0.5,-0.5) $) -- (cdr);
      \draw[clauseCycleCommon] (cdr) -- (ineqce);
      \draw[clauseCycleCommon] (ineqce) -- (ineque);
      \draw[clauseCycleCommon] (ineque) -- (buy);
      \draw[clauseCycleCommon]
        (buy) -- ($ ($ (abq)!0.5!(buy) $) + (-0.4,-0.4) $) -- (abq);
      \draw[clauseCycleCommon] (abq) -- (aux);
      \draw[clauseCycleO1]
        (aux) -- ++(5,0)
        node[anchor=north west,xshift=-0.1cm] {O1};
      \draw[clauseCycleO2]
        (aux) -- ++(-2.5,0)
        node[anchor=north east,xshift=0.1cm] {O2};

      \draw[dotted,thick] (x) -- ++(0,1)
        (y) -- ++(-1,0)
        (z) -- ++(1,0);
    \end{tikzpicture}
    \caption{%
        Gadget of Mulzer and Rote for the clause $x \lor y \lor z$.
    }
    \label{fig:3clauseGadget}
  \end{subfigure}

  \caption{Gadgets for our \cNP-hardness proof.  Variables are
    represented by circles, clauses by empty boxes.  Boxes with
    inequality sign represent the inequality gadget depicted in~(a).
    The dashed line shows how we weave the cycle through the clauses.
    There are two variants of the cycle, which differ in only one
    edge.  The cycle either touches (O1) or passes through (O2) the
    gadget.} %
  \label{fig:np-gadgets}
\end{figure}

  Second, we show how to replace a clause $(x \lor y \lor z)$ consisting of three literals.
  To this end, we use a variant of the gadget of Mulzer and Rote,
  in which we replaced the equality gadget by two inequalities:
  \[ (x,u,a) \land (y,u,b) \land (a,b,q) \land (u \neq e) \land (e \neq c)
    \land (d \neq z) \land (c,d,r). \]
  As before, the variables $a$, $b$, $c$, $d$, $e$, $q$, $r$, and $u$ are new
  variables that are not used outside the gadget.
  This expression is satisfiable if and only if $x \lor y \lor z$
  holds~\cite[Lemma 3.5]{mr-mwtnp-JACM08}.
  Figure~\ref{fig:3clauseGadget} shows the gadget with the cycle through the clauses.
  The four clauses in the inequality gadget can obviously be included on the
  cycle by a short detour.
  If $x$ occurs in a clause as a negative literal, we add a new variable that
  is connected to $x$ by an inequality gadget.
  For all gadgets that we use in our reduction it is possible to add the clauses
  of this inequality gadget into the cycle.

  Finally, we have to consider clauses that consist of only two
  literals: $x \lor y$.  The corresponding construction of Mulzer and Rote did not
  allow us to add a cycle through the clauses.  Therefore, we use a
  new gadget:
  \[ (a,b,c) \land (x \neq b) \land (y \neq c) \]
  We show that the clause in this gadget is satisfiable if and only if $x \lor y$ holds:
  If both $x$ and $y$ are false, $b$ and $c$ are true and hence more than one
  variable in the clause is true.
  If $x$ and $y$ are true, $b$ and $c$ are false and the clause can be
  fulfilled by setting $a$ to true.
  If $x$ and $y$ have distinct truth values, $b$ and $c$ also have distinct
  truth values and thus the clause can be fulfilled by setting $a$ to
  false.
  A drawing of the gadget including the cycle is depicted in
  Fig.~\ref{fig:2clauseGadget}.

  If some of the literals in the \TSAT clause are negated, we can simply add a
  new variable and an inequality gadget. For example, we can transform the clause
  $(\neg x \lor y \lor z)$ to $(x \neq a) \land (a \lor y \lor z)$, where $a$ is
  a new variable.

  In summary we constructed a \PPCOITSAT instance
  that is satisfiable if and only if the given \TSAT formula is satisfiable.
  Of course, this transformation can be done in polynomial time
  and hence we showed \cNP-completeness of the problem.
\end{proof}

We now introduce what we call the \emph{intersection line gadget}; see
Fig.~\ref{fig:intersectionLineGadget}.
It consists of a $K_{3,4}$ in which the vertices in the smaller set of the
bipartition---denoted by $v_1$, $v_2$, and $v_3$---are connected by a path.
The vertices in the other set are denoted by $u_1$, $u_2$, $u_3$, and $u_4$.

\begin{figure}[htb]
  \normalsize
  \centering
  \begin{tikzpicture}
    \coordinate (v1) at (-0.4,0);
    \coordinate (v2) at (0,0);
    \coordinate (v3) at (0.4,0);

    \draw (0,0) ellipse (0.7 and 0.3);

    \coordinate (o1) at (0.65,1);
    \coordinate (o2) at (0.65,-1);
    \coordinate (o3) at (-0.65,-1);
    \coordinate (o4) at (-1,2);
    \foreach \i/\style in {1/,2/,3/} {
      \fill[\style] (o\i) circle(2pt);
      \foreach \k in {1,2,3} {
        \draw[\style] (o\i) -- (v\k);
      }
    }
      \fill[] (o4) circle(2pt);
      \foreach \k in {1,2,3} {
        \draw[thick] (o4) -- (v\k);
      }

    \begin{scope}[]
      \fill (v1) circle(2pt);
      \fill (v2) circle(2pt);
      \fill (v3) circle(2pt);
      \node[anchor=east,xshift=-0.25cm] at (v1) {$v_1$, $v_2$, $v_3$};
      \draw (v1) -- (v2) -- (v3);
    \end{scope}

    \node[anchor=west] at (o4) {$u_1$};
    \node[anchor=west] at (o1) {$u_2$};
    \node[anchor=west] at (o2) {$u_3$};
    \node[anchor=east] at (o3) {$u_4$};

    \begin{scope}[shift={(2,0)},scale=\fignphardnessscale]
      \coordinate (symb) at (0,0);
      \coordinate (b) at (0,1.0);
      \coordinate (c1) at (0.4-0.2,0);
      \coordinate (c2) at (0.4,0);
      \coordinate (c3) at (0.4+0.2,0);
      \draw (c2) ellipse (0.4 and 0.2);
      \foreach \k in {1,2,3} {
        \draw (c\k) -- (b);
        \fill (c\k) circle(1.5pt);
      }
      \draw (c1) -- (c3);
    \end{scope}
    \fill (b) circle(2pt);
    \node[anchor=west] at (b) {$u_1$};

    \draw[->,thick] ($ (v3)!0.3!(c1) $) -- ($ (v3)!0.7!(c1) $);

  \end{tikzpicture}
  \caption{The intersection line gadget and how it is depicted in
      Fig.~\ref{fig:rho-np-graph-construction}.
  }
  \label{fig:intersectionLineGadget}
\end{figure}

\begin{lemma}\label{lem:intersectionLineGadget}
  If a graph containing the intersection line gadget as a subgraph can be embedded on two non-parallel
  planes, the vertices $v_1$, $v_2$, and $v_3$ must be drawn on the intersection
  line of the two planes while the vertices $u_1$, $u_2$, $u_3$, and $u_4$ cannot
  lie on the intersection line.
\end{lemma}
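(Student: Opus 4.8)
The plan is to study the drawing entirely through the intersection line $\ell = P_1 \cap P_2$, which is a genuine line because the planes are non-parallel. First I would record a purely geometric fact: in any straight-line drawing contained in $P_1 \cup P_2$, every edge has both endpoints in a common plane. Indeed, if an edge $xy$ had $x \in P_1 \setminus \ell$ and $y \in P_2 \setminus \ell$, then, letting $f_i$ be an affine form vanishing exactly on $P_i$, the affine function $t \mapsto f_i((1-t)x + ty)$ vanishes at only one endpoint of $[0,1]$ (at $t=0$ for $i=1$, at $t=1$ for $i=2$), so no interior point of the segment lies in $P_1$ or in $P_2$. I would then partition the seven vertices into $L$ (drawn on $\ell$), $A$ (in $P_1 \setminus \ell$), and $B$ (in $P_2 \setminus \ell$); the fact above says there is no edge between $A$ and $B$.

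Next I would show that $A$ and $B$ are both non-empty. If $B = \emptyset$, then all vertices, and hence (by the common-plane fact) all edges, lie in the single plane $P_1$, yielding a crossing-free straight-line drawing of the gadget in a plane; but the gadget contains $K_{3,3}$ (namely $v_1,v_2,v_3$ together with any three of the $u_j$), which is non-planar, a contradiction, and symmetrically $A \neq \emptyset$. Since the gadget is connected and carries no $A$--$B$ edge, $A$ and $B$ are unions of connected components of $G - L$; as both are non-empty, $G - L$ is disconnected, that is, $L$ is a vertex separator.

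The second ingredient is an overlap fact: in a crossing-free drawing, a vertex $w$ lying on $\ell$ cannot have three neighbours that also lie on $\ell$, since two of them would fall on the same ray of $\ell$ from $w$, forcing the edge to the farther one to run along $\ell$ through the nearer vertex. To finish I would determine $L$ by a short case analysis on $|L \cap \{v_1,v_2,v_3\}|$. Because the $u_j$ are pairwise non-adjacent and each is adjacent to all of $v_1,v_2,v_3$, the graph $G - L$ stays connected whenever some $v_i$ and some $u_j$ both survive; hence disconnectedness forces all three $v_i \in L$ or all four $u_j \in L$. Using the path $v_1 v_2 v_3$ to treat the second alternative, the only separators that arise are $\{v_1,v_2,v_3\}$, this set enlarged by one or two of the $u_j$, and $\{v_2,u_1,u_2,u_3,u_4\}$. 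In each of the last three, some vertex of $L$ (a $u_j$, or $v_2$) has at least three neighbours lying in $L$ and therefore on $\ell$, contradicting the overlap fact. Thus $L = \{v_1,v_2,v_3\}$, which is exactly the assertion: $v_1,v_2,v_3$ are drawn on $\ell$, and none of $u_1,u_2,u_3,u_4$ is.

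The main obstacle, I expect, is not any single computation but the interplay of combinatorics and geometry: connectivity alone admits several candidate separators, and it is essential to invoke the collinearity/overlap fact to discard every separator that contains a $u_j$ or omits a $v_i$. Care is also needed to fix the right meaning of \emph{embedded on two planes} --- a crossing-free straight-line drawing whose edges lie in $P_1 \cup P_2$ --- since both the non-planarity step and the overlap step rely on crossing-freeness.
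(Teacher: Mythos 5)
Your proof is correct, and it rests on the same two geometric ingredients as the paper's: (i) an edge cannot join a point of $P_1\setminus\ell$ to a point of $P_2\setminus\ell$, and (ii) a vertex on $\ell$ cannot have three neighbours on $\ell$ in a crossing-free drawing (the paper invokes exactly this when it says at most two of $v_2$'s neighbours can lie on the intersection line). What differs is the organization. The paper argues locally by propagation: if $v_2$ were off $\ell$, its whole closed neighbourhood --- the entire gadget --- would sit in one plane and contain $K_{3,3}$; then, with $v_2$ pinned to $\ell$, the same propagation starting from a hypothetical off-line $v_1$ (via some $u_j$ forced off $\ell$, hence $v_3$ forced into the same plane) again traps $K_{3,3}$ in a single plane; finally the $u_j$ are excluded from $\ell$ by fact (ii). You instead globalize the argument: the set $L$ of on-line vertices must separate the gadget (since both open half-gadgets are non-empty by $K_{3,3}$ non-planarity and carry no edge between them), you enumerate the handful of separators of $K_{3,4}$ plus the path $v_1v_2v_3$, and you kill every candidate other than $\{v_1,v_2,v_3\}$ with fact (ii). Your enumeration is complete: the complete-bipartite structure forces $L$ to contain all the $v_i$ or all the $u_j$, and the surviving candidates $\{v_1,v_2,v_3\}\cup U$ with $|U|\in\{1,2\}$ and $\{v_2,u_1,u_2,u_3,u_4\}$ each contain a vertex with three on-line neighbours. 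The separator framing buys a cleaner, exhaustive case analysis (and would generalize mechanically to other gadgets), at the cost of a slightly longer setup; the paper's local argument is shorter but leaves the reader to verify that no configuration has been overlooked.
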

\begin{proof}
  Let $\ell$ be the intersection line of the two planes.
  For a contradiction, assume that the middle vertex $v_2$ is not drawn
  on $\ell$, but in another position on plane $P$.
  This implies that its whole neighborhood $N(v_2)$, which consists of all other
  vertices of the gadget, has to be placed on $P$, too.
  But $N(v_2) + v_2$ is a supergraph of $K_{3,3}$ and thus cannot be drawn in one
  plane.
  Hence, $v_2$ has to be drawn on $\ell$.

  Now assume that $v_1$ lies on $P$, but not on $\ell$.
  Then, again, its neighborhood -- including the vertices $u_1$, $u_2$, and
  $u_3$ -- is also drawn on $P$.
  At most two of these vertices can be drawn on $\ell$
  because they are all neighbors of $v_2$.
  Therefore, one of them, say $u_3$, lies on $P$ without $\ell$.
  As $v_3$ is a neighbor of $u_3$, it is also placed on $P$.
  Again we draw a supergraph of $K_{3,3}$ on $P$; a contradiction.

  By applying a symmetric argument to~$v_3$, we can infer that $v_1$, $v_2$,
  and $v_3$ have to be drawn on~$\ell$.
  Since $u_1, u_2, u_3, u_4$ are adjacent to all of the three vertices, they clearly
  cannot lie on~$\ell$.
\end{proof}

\tikzset{
  falseVariable/.append style={blue},
  trueVariable/.append style={red!50},
  blockingCaterpillar/.append style={thick},
}
\newcommand{\NPDrawClauseGadgets}{
  \foreach \x/\i in {0/1,2.5/2,5/3,7.5/4,10/5,12.5/6,15/7} {
    \coordinate (v1\i) at (\x-0.4,0);
    \coordinate (v2\i) at (\x,0);
    \coordinate (v3\i) at (\x+0.4,0);
    \draw (\x-0.65,0.25) rectangle (\x+0.65,-0.25);
    \draw (v1\i) -- (v2\i) -- (v3\i);
  }
}
\newcommand{\NPDrawClauseGadgetsVertices}{
  \foreach \i in {1,...,7} {
    \fill (v1\i) circle(2pt);
    \fill (v2\i) circle(2pt);
    \fill (v3\i) circle(2pt);
  }
}
\newcommand{\NPDrawBlockingCaterpillar}{
  \foreach \i in {1,2,...,7} {
    \fill (o\i) circle(2pt);
    \foreach \k in {1,2,3} {
      \draw[blockingCaterpillar] (o\i) -- (v\k\i);
    }
  }
  \draw[blockingCaterpillar] (o1) -- (o2) -- (o3) -- (o4) -- (o5) -- (o6) -- (o7);
}
\newcommand{\NPDrawVariable}[1][]{
  \begin{scope}[\col]
    \foreach \i/\x/\y/\gadgdist in \coords {
      \coordinate (\symb\i) at (\x,\y);
      \coordinate (\symb b\i) at (\x,\y+\dist);

      \ifstrequal{#1}{movef}{
        \expandafter\ifx\symb f
          \expandafter\ifx\i 2
            \coordinate (\symb b\i) at (\x,1.5+\dist);
          \fi
        \fi
      }{}
      \ifstrequal{#1}{intersection}{
        \coordinate (\symb c1\i) at (\x+\gadgdist-0.2,\y);
        \coordinate (\symb c2\i) at (\x+\gadgdist,\y);
        \coordinate (\symb c3\i) at (\x+\gadgdist+0.2,\y);
        \draw (\symb c2\i) ellipse (0.4 and 0.2);
        \foreach \k in {1,2,3} {
          \draw (\symb c\k\i) -- (\symb b\i);
          \fill (\symb c\k\i) circle(1.5pt);
        }
        \draw  (\symb c1\i) -- (\symb c3\i);
      }{}
    }
    \foreach \i/\y in \connections {
      \fill (\symb\y) circle(2pt);
      \fill (\symb b\y) circle(2pt);
      \draw (\symb\y) -- (\symb b\y);
      \foreach \k in {1,2,3} {
        \draw (\symb\y) -- (v\k\i);
      }
    }

    \draw (\symb b1) \foreach \i/\x/\y in \coords { -- (\symb b\i)};
  \end{scope}
}
\newcommand{\NPIntersectionLine}{
  \draw[dashed] (-1,0) -- (16,0);
}

\begin{theorem}\label{thm:rhoeq2}
  Given any graph~$G$, the decision problem~\enquote{$\rho_3^2(G)=2$?}
  is \cNP-hard.
\end{theorem}

\begin{proof}
  We show \cNP-hardness by reduction from \PPCOITSAT which is NP-complete by Lemma~\ref{lem:PosPlaCycOneNPC}.
  Consider an instance of that problem consisting
  of a formula~$\Phi$
  and a planar combinatorial embedding of $G(\Phi) +  C$
  where  $C=(c_1, \dots, c_n)$ is a cycle through the clause-vertices.
  This embedding gives us a partition of the vertices that are not part of~$C$:
  $V_1$ is the set of vertices in the interior of~$C$, and
  $V_2$ is the set of vertices in the exterior of~$C$.

  Let $G'$ be the graph that is obtained from $G(\Phi)$ by adding the
  edges of~$C$, except for the edge~$\{c_n,c_1\}$.
  Consider a straight-line drawing~$\Gamma$ of~$G'$ such
  that~$c_1,\dots,c_n$ lie on a horizontal line that separates the vertices
  of~$V_1$ from those of~$V_2$.  We can obtain such a drawing, for
  example, by first computing straight-line drawings of the two
  subgraphs~$G'[V_1 \cup \{c_1,\dots,c_n\}]$ and~$G'[V_2 \cup
  \{c_1,\dots,c_n\}]$ such that the path~$(c_1,\dots,c_n)$ is on the
  outer face in both drawings, and by then gluing the two drawings
  together using two appropriate linear transformations.

  We build the graph $G^*(\Phi)=(V,E)$ as follows:
  Each clause $c$ is represented by a clause gadget that consists of three
  vertices $v^1_c$, $v^2_c$, and $v^3_c$
  that are connected by a path.
  Let $x$ be a variable that occurs in the clauses $c_{i_1}, c_{i_2}, \dots,
  c_{i_l}$ with $i_1 < i_2 < \dots < i_l$.
  For the variable-clause incidence of the variable $x$ with the clause $c_k$
  ($k \in {i_1, \dots, i_l}$), we add two vertices $w^x_k$ and $v^x_k$ that are
  adjacent to each other.
  Additionally, we make~$w^x_k$ adjacent to the clause-vertices $v^1_{c_k}$,
  $v^2_{c_k}$, and $v^3_{c_k}$.
  Thus, $v^x_k$ is connected to the three clause-vertices via $w^x_k$.
  Finally, we connect the vertices $v^x_1, \dots, v^x_l$ such that they form a
  path that inherits the order~$c_1,\dots,c_n$ of the clauses on the cycle $C$.
  To each of the vertices $v^x_1, \dots, v^x_l$ one instance of the
  intersection line gadget is connected.
  Finally, we add a blocking caterpillar that consists of the vertices
  $v^\mathrm{b}_1, \dots, v^\mathrm{b}_n$ together with the clause-vertices and connects the clauses in the
  cyclic order of $C$.
  See Fig.~\ref{fig:rho-np-graph-construction} for an example of this
  construction.

\begin{figure}[tb]
  \centering
  \begin{tikzpicture}[scale=\fignphardnessscale]
    \fill[black!20] {[rounded corners=0.5cm] (-1.25,3.75) -- (-1.25,2) }
      -- (3.1,2) .. controls (3.4,2) and (3.3,2.75) .. (3.6,2.75)
      -- (13.2,2.75) .. controls (13.5,2.75) and (13.5,2) .. (13.8,2)
      {[rounded corners=0.5cm] -| (16,3.75) -- cycle};

    \NPDrawClauseGadgets

    \coordinate (o1) at (0,1);
    \coordinate (o2) at (2.5,1);
    \coordinate (o3) at (5,1);
    \coordinate (o4) at (7.5,1);
    \coordinate (o5) at (10,1);
    \coordinate (o6) at (12.5,1);
    \coordinate (o7) at (15,1);
    \NPDrawBlockingCaterpillar

    \foreach \symb/\col/\coords/\dist/\connections in
      {a/trueVariable/{1/1/-2/-1,2/6/-2/-1,3/15/-2/0.75}/-0.5/{1/1,3/2,7/3},
      b/falseVariable/{1/0/-1/-0.75,2/2.5/-1/-0.75,3/5/-1/-0.75}/-0.5/{1/1,2/2,3/3},
      c/trueVariable/{1/3.5/2/0.75,2/13.25/2/-0.75}/0.5/{2/1,6/2},
      d/falseVariable/{1/8.5/1.5/-0.55,2/9/1.5/0.75,3/11.5/1.5/-0.75}/0.5/{4/1,5/2,6/3},
      e/trueVariable/{1/8/-1.5/-0.75,2/9.5/-1.5/-0.75}/-0.5/{4/1,5/2},
      f/falseVariable/{1/5.75/1.5/-0.75,2/6.5/1.5/0.55}/0.5/{3/1,4/2},
      g/falseVariable/{1/10.5/-1.5/0.55,2/12.5/-1.5/-0.55,3/14/-1.5/-0.6}/-0.5/{5/1,6/2,7/3},
      h/falseVariable/{1/1/2.5/-1.5,2/1.5/2.5/1.4,3/13.75/2.5/1.5}/0.5/{1/1,2/2,7/3}} {

      \NPDrawVariable[intersection]
    }

    \NPDrawClauseGadgetsVertices

    \begin{scope}[black]
      \node[anchor=south] at ($ (hb1) + (-0.1,0) $) {$v^x_1$};
      \node[anchor=south] at ($ (hb2) + (0.1,0) $) {$v^x_2$};
      \node[anchor=south] at (hb3) {$v^x_3$};
      \node[anchor=east] at (h1) {$w^x_1$};
      \node[anchor=west] at ($ (h2) + (0,-0.15) $) {$w^x_2$};
      \node[anchor=west] at ($ (h3) + (0,-0.1) $) {$w^x_3$};
      \node[anchor=south] at ($ ($(hb2)!0.5!(hb3)$) + (0,0) $) {$x$};
    \end{scope}
    \node[anchor=south] at (o1) {$v^{\mathrm{b}}_1$};
    \node[anchor=south] at (o2) {$v^{\mathrm{b}}_2$};
    \node[anchor=south] at (o7) {$v^{\mathrm{b}}_7$};
    \node[anchor=east] at ($ (v11) + (-0.09,0) $) {$v^1_{c_1}$};
    \node[anchor=west] at ($ (v31) + (+0.13,0) $) {$v^3_{c_1}$};
    \node[anchor=east] at ($ (v12) + (-0.09,0) $) {$v^1_{c_2}$};
    \node[anchor=west] at ($ (v32) + (+0.13,0) $) {$v^3_{c_2}$};
    \node[anchor=east] at ($ (v17) + (-0.09,0) $) {$v^1_{c_7}$};
    \node[anchor=west] at ($ (v37) + (+0.13,0) $) {$v^3_{c_7}$};

  \end{tikzpicture}
  \caption{Example for the graph $G^*(\Phi)$ constructed from a \PPCOITSAT
    instance $\Phi$.
    The clauses are depicted by the black boxes with three vertices inside and
    denoted by $c_1, \dots, c_7$ from left to right.
    The variables are drawn in pale red (true) and blue (false).
    The variable $x$ is highlighted by a shaded background.
    The ellipses attached to variable-vertices stand for the
    intersection line gadget (see Fig.~\ref{fig:intersectionLineGadget}).
    The depicted vertices incident to the gadget correspond to~$u_1$
    in Fig.~\ref{fig:intersectionLineGadget}; $u_2$, $u_3$, and $u_4$ are not
    shown.  If $\Phi$ is true, one plane covers the blue variable
    gadgets and one plane covers the blocking caterpillar (bold black) and
    the pale red variable gadgets.}
  \label{fig:rho-np-graph-construction}
\end{figure}

  We are going to show that the formula $\Phi$ has a truth assignment with
  exactly one true variable in each clause if and only if the graph $G^*(\Phi)$
  can be drawn onto two planes.

  First, assume that the formula $\Phi$ has a \OITSAT assignment, that is, an
  assignment where exactly one variable in each clause is true.
  Then we can draw it onto two intersecting planes $\PT$,
  $\PF$ in the following way:
  We place the clause-vertices on the intersection line of the two planes in the
  order given by the cycle $C$.
  This intersection line splits each of the two planes into two half-planes.

  On the plane $\PT$ we place the variable-vertices that are set to
  true and the edges connecting them to the clause-vertices; see
  Fig.~\ref{fig:rho-np-true}.
  Obviously, we can draw the edges without crossings, because our \OITSAT
  instance is planar and each clause is connected to only one variable.
  We remark that the variables are possibly placed on both half-planes of
  $\PT$, but we use an embedding where each variable is fixed to only
  one half-plane.
  Since the true variables cover only one side of each clause gadget, we can
  attach the corresponding vertex of the blocking caterpillar to the other side.
  Note that the path connecting the vertices $v_i^\mathrm{b}$ can cross the
  intersection line between the clause gadgets because all the edges incident to
  variable-vertices of one variable stay inside one of the half-planes.

\begin{figure}[tb]
  \begin{subfigure}[b]{\textwidth}
    \centering
    \begin{tikzpicture}[scale=\fignphardnessscale]
      \NPIntersectionLine
      \NPDrawClauseGadgets
      \NPDrawClauseGadgetsVertices

      \coordinate (o1) at (0,1);
      \coordinate (o2) at (2.5,-1);
      \coordinate (o3) at (5,1);
      \coordinate (o4) at (7.5,1);
      \coordinate (o5) at (10,1);
      \coordinate (o6) at (12.5,-1);
      \coordinate (o7) at (15,1);
      \NPDrawBlockingCaterpillar

      \foreach \symb/\col/\coords/\dist/\connections in
        {a/trueVariable/{1/1/-2,2/6/-2,3/15/-2}/-0.5/{1/1,3/2,7/3},
        c/trueVariable/{1/3.5/2,2/13.25/2}/0.5/{2/1,6/2},
        e/trueVariable/{1/8/-1.5,2/9.5/-1.5}/-0.5/{4/1,5/2}} {

        \NPDrawVariable[]
      }

      \node[anchor=south] at (o1) {$v^{\mathrm{b}}_1$};
      \node[anchor=north] at (o2) {$v^{\mathrm{b}}_2$};
      \node[anchor=south] at (o3) {$v^{\mathrm{b}}_3$};
      \node[anchor=south] at (o4) {$v^{\mathrm{b}}_4$};
      \node[anchor=south] at (o5) {$v^{\mathrm{b}}_5$};
      \node[anchor=north] at (o6) {$v^{\mathrm{b}}_6$};
      \node[anchor=south] at (o7) {$v^{\mathrm{b}}_7$};
    \end{tikzpicture}
    \caption{Plane $P_\mathrm{T}$ for the instance shown in
      Fig.~\ref{fig:rho-np-graph-construction}.
      The true variables are drawn in red.
      The remaining black vertices and edges form the blocking caterpillar.
      }
    \label{fig:rho-np-true}
  \end{subfigure}

  \medskip

  \begin{subfigure}[b]{\textwidth}
    \centering
    \begin{tikzpicture}[scale=\fignphardnessscale]
      \NPIntersectionLine
      \NPDrawClauseGadgets
      \NPDrawClauseGadgetsVertices

      \foreach \symb/\col/\coords/\dist/\connections in
        {b/falseVariable/{1/0/-1,2/2.5/-1,3/5/-1}/-0.5/{1/1,2/2,3/3},
        d/falseVariable/{1/8.5/1.5,2/9/1.5,3/11.5/1.5}/0.5/{4/1,5/2,6/3},
        f/falseVariable/{1/5/1.5,2/6.75/-1}/0.5/{3/1,4/2},
        g/falseVariable/{1/10.5/-1.5,2/12.5/-1.5,3/14/-1.5}/-0.5/{5/1,6/2,7/3},
        h/falseVariable/{1/1/2.5,2/1.5/2.5,3/13.75/2.5}/0.5/{1/1,2/2,7/3}} {

        \NPDrawVariable[movef]
      }
      \node[anchor=south] at ($ (fb1) + (-0.1,0) $) {$v^y_1$};
      \node[anchor=south] at ($ (fb2) + (0.1,0) $) {$v^y_2$};
      \node[anchor=east] at (f1) {$w^y_1$};
      \node[anchor=north] at (f2) {$w^y_2$};
    \end{tikzpicture}
    \caption{Plane $P_\mathrm{F}$ for the instance shown in
      Fig.~\ref{fig:rho-np-graph-construction}.
      The false variables are drawn in blue.
      At the clause gadget in the middle variable $y$ crosses
      the intersection line of the two planes.
      }
    \label{fig:rho-np-false}
  \end{subfigure}

  \caption{The two planes for the instance shown in
    Fig.~\ref{fig:rho-np-graph-construction}.
    The clauses are depicted by the black boxes with three vertices inside.
    The dashed line is the intersection line between the two planes.
    At the clause gadget in the middle of the line one of the variables crosses
    the intersection line of the two planes.
    The intersection line gadgets are left out in these figures, but can easily
    placed on the dashed line without interfering.}
\end{figure}

  On the plane $\PF$ we place the false variables' vertices; see
  Fig.~\ref{fig:rho-np-false}.
  Each clause contains exactly two false variables, which we obviously have to
  place on different half-planes.
  If both variables are on the same side of the cycle $C$ in the given planar
  embedding $\Gamma$, we draw one of the vertices $w^x_i$ onto the other side.
  Since we have only two variables per clause, we can draw the edge $\{v^x_i,
  w^x_i\}$, which connects the two half-planes, directly alongside the clause
  gadget without destroying planarity.
  Clearly, we can add the remaining vertices of the intersection line gadgets
  without interfering each other.

  For the other direction, we assume that we are given a drawing of $G^*(\Phi)$
  on two planes.
  Obviously, these planes intersect as otherwise the intersection line gadgets
  contained in $G^*(\Phi)$ could not be drawn.
  Lemma~\ref{lem:intersectionLineGadget} shows that the clause-vertices lie on
  the intersection line of the two planes, while the variable-vertices and the
  blocking caterpillar cannot lie on the intersection line.
  The vertices of each variable completely lie on one plane:
  Since they are connected, one of them had to be placed on the intersection
  line otherwise to prevent edges running outside of the planes.
  Similarly the blocking caterpillar is only on one of the planes; we call this
  plane $\PT$, the other one $\PF$.

  To get a \OITSAT assignment for $\Phi$, we now set the variables that are
  drawn on $\PT$ to true and those on $\PF$ to false.
  Obviously, every clause gadget can have at most one neighbor in each of the
  four half-planes.
  Since each clause is adjacent to a vertex of the blocking caterpillar in one
  of the half-planes of
  $\PT$, it is connected to at most one variable in $\PT$; that is, each clause
  contains at most one true variable.
  On the other hand, only two variables of the clause can be drawn on $\PF$, so
  there are at most two false variables in each clause.
  Together this yields that there is exactly one true variable in each clause
  and thus we constructed a feasible \OITSAT assignment for $\Phi$.
\end{proof}

\begin{corollary} \label{cor:rhogeq2}
  Given any graph~$G$ and any integer~$k \geq 2$, the decision
  problem~\enquote{$\rho_3^2(G) = k$?} is \cNP-hard.
\end{corollary}
\begin{proof}
  We extend the approach from Theorem~\ref{thm:rhoeq2} by additional blocking
  gadgets.
  We add $(k-2)$ identical gadgets, each of which blocks one plane.
  The gadget consists of two new vertices $u$ and $v$, which are adjacent.
  Each of $u$ and $v$ is adjacent to all vertices in every clause gadget.
  A drawing of the gadget is depicted in Fig.~\ref{fig:rho-np-coro}.

  For a given variable assignment, we can easily find a drawing:
  Use $k$ different planes that share one common intersection line.
  Obviously, we can place each of the new gadgets onto one plane with no
  additional vertices on them.
  The variable-vertices are on the remaining two planes as described in
  Theorem~\ref{thm:rhoeq2}.

  For the other direction, we first discuss the arrangement of the $k$ planes.
  They also have to be placed in a way that they share one common intersection
  line because otherwise the clause gadgets, which are part of a
  $K_{3,2k}$ subgraph, could not be drawn.
  Each of the new blocking gadgets has to use an individual plane.
  The variable-vertices are forced on the ``true'' and ``false'' plane as in
  Theorem~\ref{thm:rhoeq2}.
\end{proof}
\begin{figure}
  \centering
  \begin{tikzpicture}[scale=\fignphardnessscale]
    \NPIntersectionLine
    \NPDrawClauseGadgets
    \NPDrawClauseGadgetsVertices

    \coordinate (o1) at (8.75,2);
    \coordinate (o2) at (8.75,-2);
    \fill (o1) circle(2pt);
    \fill (o2) circle(2pt);
    \foreach \i in {1,2,...,7} {
      \foreach \k in {1,2,3} {
        \draw (o1) -- (v\k\i);
        \draw (o2) -- (v\k\i);
      }
    }
    \draw (o1) -- (o2);
    \node[above] at (o1) {$u$};
    \node[below] at (o2) {$v$};
  \end{tikzpicture}
  \caption{
    Blocking gadget that occupies a whole plane.
    Again, the clauses are depicted by the boxes and the dashed line is the
    intersection of the $k$ planes.
  }
  \label{fig:rho-np-coro}
\end{figure}

\section{Complexity of the Weak Affine Cover Numbers~
  $\pi^1_3$ and~$\pi^2_3$}
\label{s:weak-appendix}

Recall that a \emph{linear forest} is a forest whose connected components
are paths. The \emph{linear vertex arboricity} $\lva G$ of a graph~$G$
equals the smallest size $r$ of a partition $V(G)=V_1\cup\dots\cup V_r$
such that the induced subgraphs $G[V_1],\dots,G[V_r]$ are all linear forests.
The \emph{vertex thickness} $\vt(G)$ of a graph $G$ is the smallest
size~$r$ of a partition $V(G)=V_1\cup\dots\cup V_r$ such that the
induced subgraphs $G[V_1],\dots,G[V_r]$ are all planar.
Obviously, $\vt(G)\le\lva G$.
We recently used these notions to characterize the 3D weak affine cover
numbers in purely combinatorial terms~\cite{ChaplickFLRVW20}:
$\pi^1_3(G)=\lva G$ and $\pi^2_3(G)=\vt(G)$.

\long\gdef\contentThmComplPiOneThree{%
  For $l\in\{1,2\}$, given any graph~$G$,
    \begin{enumerate}%[leftmargin=*,noitemsep,label=(\alph*),topsep=0ex]
        \item\label{thm:ComplPi13:NPcomplete}
        deciding whether or not $\pi^l_3(G)\le2$ is \cNP-complete, and
        \item\label{thm:ComplPi13:approximation}
        approximating $\pi^l_3(G)$ within a factor of $O(n^{1-\epsilon})$,
        for any $\epsilon>0$, is \cNP-hard.
    \end{enumerate}
}

\begin{theorem}
  \label{thm:ComplPi13}
  \contentThmComplPiOneThree
\end{theorem}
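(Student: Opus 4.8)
The plan is to prove both parts using the combinatorial identities $\pi^1_3(G)=\lva G$ and $\pi^2_3(G)=\vt(G)$ together with the monotonicity $\vt(G)\le\lva G$ recalled above. Membership in NP (for part (a)) is immediate for both parameters: a bipartition $V(G)=V_1\cup V_2$ is a polynomial-size certificate, and one checks in polynomial time that each $G[V_i]$ is a linear forest (every component a path, equivalently maximum degree $\le2$ and acyclic) or, respectively, planar (by linear-time planarity testing). It remains to prove the two hardness claims, and the key economy is that a \emph{single} reduction can serve both $l=1$ and $l=2$: if a reduction produces a graph $G$ with $\lva G\le2$ in the \textsc{yes} case and $\vt(G)\ge3$ in the \textsc{no} case, then, since $\vt(G)\le\lva G$ always, in the \textsc{yes} case $\vt(G)\le\lva G\le2$ and in the \textsc{no} case $\lva G\ge\vt(G)\ge3$. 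Thus the same instance map reduces the source problem to each of $\lva G\le2$ and $\vt(G)\le2$ at once.

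For the hardness in part (a), I would reduce from an NP-complete, $2$-coloring-flavored satisfiability problem, for concreteness monotone \textsc{Not-All-Equal 3-Sat}, reading the bipartition $(V_1,V_2)$ as a truth assignment (side $V_1$ true, side $V_2$ false). The construction uses two kinds of gadgets. A \emph{variable gadget} for $x$ produces one ``copy'' vertex per occurrence of $x$ and forces all copies onto the same side of any admissible bipartition; this is necessary because a single high-degree vertex cannot be reused across many clauses without violating the degree-$\le2$ linear-forest condition. A \emph{clause gadget} for a triple $\{x,y,z\}$ attaches to the relevant copies and is designed so that it completes to a linear forest on each side precisely when the three copies are \emph{not} all on the same side. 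To obtain the sandwich, a violated (all-equal) clause must force a monochromatic \emph{non-planar} obstruction, e.g.\ a $K_5$ or, more robustly, a $K_9$-type block all of whose vertices are driven onto one side; a violated clause then makes one part non-planar, and hence neither planar nor a linear forest. The \textsc{yes} direction is then routine: a not-all-equal assignment places the gadgets so that each side induces a disjoint union of paths, giving $\lva G\le2$. The \textsc{no} direction argues that an unsatisfiable formula forces every bipartition to violate some clause and therefore to contain a monochromatic non-planar block, so $\vt(G)\ge3$.

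For the inapproximability in part (b), I would exploit that both parameters are squeezed between constant multiples of the chromatic number. A linear forest is bipartite and a planar graph is $4$-colorable, so from a partition witnessing $\lva G=r$ (resp.\ $\vt(G)=r$) one $2$-colors (resp.\ $4$-colors) each part to obtain a proper coloring of $G$; this gives $\chi(G)\le2\,\lva G$ and $\chi(G)\le4\,\vt(G)$. Conversely, each color class of a proper $\chi(G)$-coloring is independent, hence a linear forest and planar, so $\lva G\le\chi(G)$ and $\vt(G)\le\chi(G)$. Altogether $\tfrac14\chi(G)\le\vt(G)\le\lva G\le\chi(G)$. Consequently, a polynomial-time algorithm approximating $\lva G$ (or $\vt(G)$) within a factor $n^{1-\epsilon}$ would, after multiplying the output by the constant $2$ (or $4$), approximate $\chi(G)$ on the \emph{same} vertex set within a factor $O(n^{1-\epsilon})$. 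Since it is NP-hard to approximate $\chi(G)$ within $n^{1-\epsilon'}$ for every $\epsilon'>0$ (Zuckerman), and for any $\epsilon'<\epsilon$ the constant factor is absorbed as $2\,n^{1-\epsilon}\le n^{1-\epsilon'}$ for large $n$, the $n^{1-\epsilon}$-inapproximability transfers to both $\lva G$ and $\vt(G)$, proving part (b) for $l\in\{1,2\}$.

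The main obstacle is the gadget engineering in part (a): one must force the variable-copy vertices onto a common side using only low-degree, acyclicity-respecting attachments (so as not to spuriously inflate $\lva$ on the \textsc{yes} side), while simultaneously arranging that a violated clause deposits a \emph{non-planar} monochromatic block — the stronger requirement that yields $\vt(G)\ge3$ rather than merely $\lva G\ge3$. Verifying that the completed construction induces, in the \textsc{yes} case, a genuine partition into two linear forests (no hidden monochromatic cycle and no degree-$3$ vertex inside a side) and, in the \textsc{no} case, an unavoidable monochromatic $K_5$/$K_9$, is where the careful case analysis lies; everything else (NP membership and the chromatic sandwich) is routine.
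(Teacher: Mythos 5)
Your part~(b) is correct and is essentially the paper's own argument: the sandwich $\tfrac14\chi(G)\le \vt(G)=\pi^2_3(G)\le \lva G=\pi^1_3(G)\le\chi(G)$ via the four-color theorem and the fact that independent sets are linear forests, followed by Zuckerman's inapproximability of $\chi$. Your NP-membership argument for part~(a) also matches the paper.

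The gap is in the NP-hardness of part~(a). The paper does not build gadgets at all: it invokes Farrugia's theorem that for any two classes $\mathcal P,\mathcal Q$ closed under disjoint unions and induced subgraphs (other than the trivial ones), deciding whether $V(G)$ splits into $X,Y$ with $G[X]\in\mathcal P$ and $G[Y]\in\mathcal Q$ is NP-hard, applied with $\mathcal P=\mathcal Q$ equal to the linear forests (for $l=1$) and to the planar graphs (for $l=2$). You instead propose a direct reduction from monotone \textsc{Not-All-Equal 3-Sat} whose single output graph would serve both $l=1$ and $l=2$ via $\vt\le\lva$. The sandwich idea is sound in principle, but you never construct the gadgets, and the one concrete ingredient you name would fail: you cannot use a monochromatic $K_5$ (let alone $K_9$) as the planarity obstruction in the \textsc{no} case, because $\lva(K_5)=3$ --- any bipartition of its five vertices puts three of them on one side, inducing a triangle, which is not a linear forest --- and $\lva$ is monotone under subgraphs, so merely \emph{containing} such a block forces $\lva G\ge3$ even in the \textsc{yes} case and destroys the reduction for $l=1$. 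The same tension infects the variable gadget: to force copies onto one side against an adversary who only needs planarity of each part, the forcing must rest on $K_5$/$K_{3,3}$-type obstructions, i.e., on exactly the dense substructures that are incompatible with a partition into two linear forests. Resolving this (e.g., with carefully subdivided Kuratowski subgraphs whose branch vertices collapse onto one side only when a clause is violated) is the entire content of the proof, and it is missing. As written, part~(a)'s hardness is a plan, not a proof; either supply and verify the gadgets, or cite a general partition-hardness result such as Farrugia's as the paper does.
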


\begin{proof}
        {\color{lipicsGray}\sffamily\bfseries\ref{thm:ComplPi13:NPcomplete}.}
        The membership in \cNP follows directly from the above combinatorial
        characteriza\-tion~\cite{ChaplickFLRVW20},
        which also allows us to deduce \cNP-hardness
        from a much more general hardness result by
        Farrugia~\cite{f-vpfai-EJC04}: For any two graph classes $\mathcal
        P$ and $\mathcal Q$ that are closed under vertex-disjoint unions
        and taking induced subgraphs, deciding whether the vertex set of a
        given graph $G$ can be partitioned into two parts $X$ and $Y$ such
        that $G[X]\in\mathcal P$ and $G[Y]\in\mathcal Q$ is \cNP-hard unless
        both $\mathcal P$ and $\mathcal Q$ consist of all graphs or all
        empty graphs.
        To see the hardness of our two problems,
        we set $\mathcal{P} = \mathcal{Q}$ to the class of linear forests
        (for $l=1$) and to the class of planar graphs (for $l=2$).

        {\textcolor{lipicsGray}{\sffamily\bfseries\ref{thm:ComplPi13:approximation}.}}
        The combinatorial characterization~\cite{ChaplickFLRVW20}
        given above
        implies that $\chi(G) \le 4 \vt(G) = 4 \pi^2_3(G)$ (by the
        four-color theorem).  Note that each color class can be placed
        on its own line, so $\pi^1_3(G)\le \chi(G)$.
        As $\pi^2_3(G) \leq \pi^1_3(G)$, both parameters are linearly
        related to the chromatic number of $G$.
        Now, the approximation hardness of our problems follows from that of the
        chromatic number~\cite{z-ldeim-TC07}.
\end{proof}

\section{Conclusion and Open Problems}
\label{s:open}

We have determined the complexity of computing the line cover numbers
$\rho^1_2$ and $\rho^1_3$, which turned out to be $\erclass$-complete.
On the positive side, these problems admit an FPT algorithm
(Corollary~\ref{cor:line-cover-fpt}).  This is impossible for the
plane cover number $\rho^2_3$ (unless $\cP=\cNP$) because the decision
problem \enquote{$\rho^2_3(G) \le k$?} is \cNP-hard even for $k=2$
  (Theorem~\ref{thm:rhoeq2}).
  If $k$ is arbitrary and given as a part of the input, then this
  problem is in~$\erclass$ (Lemma~\ref{lem:inER})---but is it
  $\erclass$-hard?

  For a graph~$G$, its affine cover number $\rho^1_2(G)$ is connected
  to its segment number $\segm(G)$ that we mentioned in the
  introduction.  A 3D version and other variants of the segment number
  have been studied by Okamoto et al.~\cite{orw-ysng-GD19}.  They have
  shown that it is $\erclass$-complete to compute the 2D and 3D
  segment numbers \cite{orw-ysng-GD19}~-- as the 2D and 3D line cover
  numbers.  It seems plausible that the 2D and 3D segment numbers also
  admit FPT algorithms to compute them.

  Our proof of Theorem~\ref{thm:rho12-hard} implies that computing
  $\rho^1_2(G)$ and $\rho^1_3(G)$ is hard even for planar graphs of
  maximum degree~$4$. Can $\rho^1_2(G)$ and $\rho^1_3(G)$ be computed
  efficiently for trees? %
  Note that
  this is true for the segment number $\segm(G)$~\cite{desw-dpgfs-CGTA07}.

Concerning the weak affine cover numbers, we established the
computational hardness of $\pi^1_3(G)$ and $\pi^2_3(G)$
in Theorem~\ref{thm:ComplPi13}\ref{thm:ComplPi13:NPcomplete}.
Recently, Biedl et al.~\cite{bfw-lpcnr-GD19} showed that it is
\cNP-complete to decide whether $\pi^1_2(G)=2$ for a given planar graph~$G$.
  By Theorem~\ref{thm:ComplPi13}\ref{thm:ComplPi13:approximation},
  the weak affine cover numbers
  $\pi^1_3(G)$ and $\pi^2_3(G)$ are even hard to approximate.
  How hard is it to approximate $\rho^1_2(G)$, $\rho^1_3(G)$,
  $\rho^2_3(G)$, and~$\pi^1_2(G)$ for a given (planar) graph~$G$?

\subsubsection*{Acknowledgments.}

We thank Stefan Kratsch for a useful discussion of our FPT results in
Section~\ref{s:fpt}.

\bibliographystyle{abbrvurl}
\bibliography{abbrv,planes}

\end{document}